\newtheorem{theorem}{Theorem}
\newtheorem{lemma}{Lemma}
\newtheorem{proposition}{Proposition}
\newtheorem{corollary}{Corollary}
\newcommand{\eq}[1]{\hyperref[eq:#1]{(\ref*{eq:#1})}}
\renewcommand{\sec}[1]{\hyperref[sec:#1]{Section~\ref*{sec:#1}}}
\newcommand{\thm}[1]{\hyperref[thm:#1]{Theorem~\ref*{thm:#1}}}
\newcommand{\lem}[1]{\hyperref[lem:#1]{Lemma~\ref*{lem:#1}}}
\newcommand{\prop}[1]{\hyperref[prop:#1]{Proposition~\ref*{prop:#1}}}
\newcommand{\cor}[1]{\hyperref[cor:#1]{Corollary~\ref*{cor:#1}}}
\newcommand{\fig}[1]{\hyperref[fig:#1]{Figure~\ref*{fig:#1}}}
\renewcommand{\th}[1]{${#1}^{\textrm{th}}$}
\renewcommand{\(}{\left(}
\renewcommand{\)}{\right)}
\newcommand{\ACzero}{$\mathrm{AC}^0$\xspace}
\newcommand{\ACzerob}{$\mathrm{AC}^0_{\mathrm{b}}$\xspace}
\begin{document}

\title{The quantum query complexity of read-many formulas}

\author{
\normalsize Andrew M.\ Childs\thanks{amchilds@uwaterloo.ca} \\[.5ex]
\small Department of Combinatorics \& Optimization \\
\small and Institute for Quantum Computing \\
\small University of Waterloo
\and
\normalsize Shelby Kimmel\thanks{skimmel@mit.edu} \\[.5ex]
\small Center for Theoretical Physics \\
\small Massachusetts Institute of Technology
\and
\normalsize Robin Kothari\thanks{rkothari@cs.uwaterloo.ca} \\[.5ex]
\small David R.\ Cheriton School of Computer Science \\
\small and Institute for Quantum Computing \\
\small University of Waterloo
}

\date{}
\maketitle

\thispagestyle{fancy}
\rhead{MIT-CTP 4330}
\renewcommand{\headrulewidth}{0pt}
\renewcommand{\footrulewidth}{0pt}

\begin{abstract}
The quantum query complexity of evaluating any read-once formula with $n$ black-box input bits is $\Theta(\sqrt n)$.  However, the corresponding problem for read-many formulas (i.e., formulas in which the inputs have fanout) is not well understood.  Although the optimal read-once formula evaluation algorithm can be applied to any formula, it can be suboptimal if the inputs have large fanout.  We give an algorithm for evaluating any formula with $n$ inputs, size $S$, and $G$ gates using $O(\min\{n, \sqrt{S}, n^{1/2} G^{1/4}\})$ quantum queries.  Furthermore, we show that this algorithm is optimal, since for any $n,S,G$ there exists a formula with $n$ inputs, size at most $S$, and at most $G$ gates that requires $\Omega(\min\{n, \sqrt{S}, n^{1/2} G^{1/4}\})$ queries.  We also show that the algorithm remains nearly optimal for circuits of any particular depth $k \ge 3$, and we give a linear-size circuit of depth $2$ that requires $\tilde\Omega(n^{5/9})$ queries.  Applications of these results include a $\tilde\Omega(n^{19/18})$ lower bound for Boolean matrix product verification, a nearly tight characterization of the quantum query complexity of evaluating constant-depth circuits with bounded fanout, new formula gate count lower bounds for several functions including \textsc{parity}, and a construction of an \ACzero circuit of linear size that can only be evaluated by a formula with $\Omega(n^{2-\epsilon})$ gates.
\end{abstract}

\section{Introduction}
\label{sec:intro}

A major problem in query complexity is the task of evaluating a Boolean formula on a black-box input.  In this paper, we restrict our attention to the standard gate set $\{\textsc{and}, \textsc{or}, \textsc{not}\}$.  A formula is a rooted tree of \textsc{not} gates and unbounded-fanin \textsc{and} and \textsc{or} gates, where each leaf represents an input bit and each internal vertex represents a logic gate acting on its children.  The depth of a formula is the length of a longest path from the root to a leaf.

The quantum query complexity of evaluating read-once formulas is now well understood.  (In this paper, the ``query complexity'' of $f$ always refers to the bounded-error quantum query complexity, denoted $Q(f)$.)  A formula is read-once if each input bit appears at most once in it. Grover's algorithm shows that a single \textsc{or} gate on $n$ inputs can be evaluated in $O(\sqrt n)$ queries \cite{Gro97}, which is optimal \cite{BBBV97}.  It readily follows that balanced constant-depth read-once formulas can be evaluated in $\tilde O(\sqrt n)$ queries \cite{BCW98} (where we use a tilde to denote asymptotic bounds that neglect logarithmic factors), and in fact $O(\sqrt n)$ queries are sufficient \cite{HMW03}.  A breakthrough result of Farhi, Goldstone, and Gutmann showed how to evaluate a balanced binary \textsc{and}-\textsc{or} formula with $n$ inputs in time $O(\sqrt n)$ \cite{FGG08} (in the Hamiltonian oracle model, which easily implies an upper bound of $n^{1/2 + o(1)}$ queries \cite{CCJY07}).  Subsequently, it was shown that any read-once formula whatsoever can be evaluated in $n^{1/2 + o(1)}$ queries \cite{ACRSZ10}, and indeed $O(\sqrt n)$ queries suffice \cite{Rei11}.  This result is optimal, since any read-once formula requires $\Omega(\sqrt n)$ queries to evaluate \cite{BS04}.

Now that the quantum query complexity of evaluating read-once formulas is tightly characterized, it is natural to consider the query complexity of evaluating more general formulas, which we call ``read-many'' formulas to differentiate them from the read-once case.   Note that we cannot expect to speed up the evaluation of arbitrary read-many formulas on $n$ inputs, since such formulas are capable of representing arbitrary functions, and some functions, such as parity, require as many quantum queries as classical queries up to a constant factor \cite{BBCMW01,FGGS98}.  Thus, to make the task of read-many formula evaluation nontrivial, we must take into account other properties of a formula besides the number of inputs.

Two natural size measures for formulas are formula size and gate count. The size of a formula, which we denote by $S$, is defined as the total number of inputs counted with multiplicity (i.e., if an input bit is used $k$ times it is counted as $k$ inputs). The gate count, which we denote by $G$, is the total number of \textsc{and} and \textsc{or} gates in the formula. (By convention, \textsc{not} gates are not counted.) Note that $G<S$ since for a given value of $S$, $G$ is largest when the formula is a binary tree with $S$ leaves, but the number of internal vertices in a binary tree is one less than the number of leaves. 

A formula of size $S$ can be viewed as a read-once formula on $S$ inputs by neglecting the fact that some inputs are identical, thereby giving a query complexity upper bound from the known formula evaluation algorithms. Thus, an equivalent way of stating the results on the evaluation of read-once formulas that also applies to read-many formulas is the following:

\begin{theorem}[Formula evaluation algorithm~{\cite[Corollary 1.1]{Rei11}}]\label{thm:formeval}
The bounded-error quantum query complexity of evaluating a formula of size $S$ is $O(\sqrt{S})$.
\end{theorem}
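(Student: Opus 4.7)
The plan is to reduce the problem to the read-once case, which (as stated in the discussion preceding the theorem) is already known to admit an $O(\sqrt{S})$ quantum query algorithm. Given a formula $\phi$ on $n$ input bits $x_1,\ldots,x_n$ with size $S$, I would first form a new formula $\phi'$ by relabeling: traverse the tree of $\phi$ and, at each leaf, replace the label $x_i$ by a fresh variable $y_{i,j}$, where $j$ ranges over the distinct occurrences of $x_i$ in $\phi$. The tree structure and all internal gates are unchanged, so $\phi'$ has exactly $S$ leaves, each labeled by a distinct variable, meaning $\phi'$ is a read-once formula on $S$ bits. By construction, if we set $y_{i,j} = x_i$ for every $i,j$, then $\phi'(y) = \phi(x)$.

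The next step is to run the read-once formula evaluation algorithm (Reichardt's result) on $\phi'$, which evaluates $\phi'$ on its $S$ inputs using $O(\sqrt{S})$ bounded-error quantum queries to those inputs. Each time the algorithm would query a variable $y_{i,j}$, we instead issue one query to the corresponding original bit $x_i$ of our black-box oracle; this simulation introduces no overhead, since one query to $x_i$ produces the value of every copy $y_{i,j}$ identified with it. The acceptance probability of the simulated algorithm equals that of the original, so we obtain $\phi(x) = \phi'(y)$ with the same bounded error using $O(\sqrt{S})$ queries to $x$.

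There is essentially no obstacle to this argument once one has the read-once formula evaluation result in hand; the entire content is the black-box simulation in the paragraph above, which preserves the query count because oracle access to $x$ trivially provides oracle access to the repeated-input vector $y$. The genuine difficulty, namely proving that every read-once formula of size $S$ can be evaluated in $O(\sqrt{S})$ queries, has already been carried out in \cite{Rei11} and is invoked here as a black box.
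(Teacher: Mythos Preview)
Your proposal is correct and matches the paper's approach exactly: the paper does not give a separate proof of this theorem but simply observes (in the paragraph immediately preceding the statement) that a formula of size $S$ ``can be viewed as a read-once formula on $S$ inputs by neglecting the fact that some inputs are identical,'' and then cites Reichardt's read-once result. Your relabeling-and-simulation argument is precisely this observation made explicit.
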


However, this upper bound does not exploit the fact that some inputs may be repeated, so it can be suboptimal for formulas with many repeated inputs. We study such formulas and tightly characterize their query complexity in terms of the number of inputs $n$ (counted \emph{without} multiplicity), the formula size $S$, and the gate count $G$.  By preprocessing a given formula using a combination of classical techniques and quantum search before applying the algorithm for read-once formula evaluation, we show that any read-many formula can be evaluated in $O(\min\{n, \sqrt{S}, n^{1/2} G^{1/4}\})$ queries (\thm{alg}).  Furthermore, we show that for any $n,S,G$, there exists a read-many formula with $n$ inputs, size at most $S$, and at most $G$ gates such that any quantum algorithm needs $\Omega(\min\{n, \sqrt{S}, n^{1/2} G^{1/4}\})$ queries to evaluate it (\thm{tight}).  We construct these formulas by carefully composing formulas with known quantum lower bounds and then applying recent results on the behavior of quantum query complexity under composition \cite{Rei09,Rei11}.

To refine our results on read-many formulas, it is natural to consider the query complexity of formula evaluation as a function of the depth $k$ of the given formula in addition to its number of input bits $n$, formula size $S$, and gate count $G$.  Beame and Machmouchi showed that constant-depth formulas (in particular, formulas of depth $k=3$ with size $\tilde O(n^2)$) can require $\tilde\Omega(n)$ queries \cite{BM10}.  However, to the best of our knowledge, nothing nontrivial was previously known about the case of depth-$2$ formulas (i.e., CNF or DNF expressions) with polynomially many gates, or formulas of depth $3$ and higher with gate count $o(n^2)$ (in particular, say, with a linear number of gates).

Building upon the results of \cite{BM10}, we show that the algorithm of \thm{alg} is nearly optimal even for formulas of any fixed depth $k \ge 3$.  While we do not have a tight characterization for depth-$2$ formulas, we improve upon the trivial lower bound of $\Omega(\sqrt n)$ for depth-$2$ formulas of linear gate count (a case arising in an application), giving an example of such a formula that requires $\tilde\Omega(n^{5/9}) = \Omega(n^{0.555})$ queries (\cor{depth2}).  It remains an open question to close the gap between this lower bound and the upper bound of $O(n^{0.75})$ provided by \thm{alg}, and in general, to better understand the quantum query complexity of depth-$2$ formulas.

Aside from being a natural extension of read-once formula evaluation, read-many formula evaluation has potential applications to open problems in quantum query complexity.  For example, the graph collision problem \cite{MSS07} can be expressed as a depth-$2$ read-many formula of quadratic gate count.
 
More concretely, we apply our results to better understand the quantum query complexity of Boolean matrix product verification.  This is the task of verifying whether $AB=C$, where $A,B,C$ are $n \times n$ Boolean matrices provided by a black box, and the matrix product is computed over the Boolean semiring, in which \textsc{or} plays the role of addition and \textsc{and} plays the role of multiplication.  Buhrman and \v{S}palek gave an upper bound of $O(n^{1.5})$ queries for this problem, and their techniques imply a lower bound of $\Omega(n)$ \cite{BS06}.  We improve the lower bound to $\tilde\Omega(n^{19/18}) = \Omega(n^{1.055})$ (\thm{bmpv}), showing in particular that linear query complexity is not achievable.

Our results can be viewed as a first step toward understanding the quantum query complexity of evaluating general circuits. A circuit is a directed acyclic graph in which source vertices represent black-box input bits, sink vertices represent outputs, and internal vertices represent \textsc{and}, \textsc{or}, and \textsc{not} gates. The size of a circuit is the total number of \textsc{and} and \textsc{or} gates, and the depth of a circuit is the length of a longest (directed) path from an input bit to an output bit.

The main difference between a formula and a circuit is that circuits allow fanout for gates, whereas formulas do not.  In a circuit, the value of an input bit can be fed into more than one gate, and the output of a gate can be fed into more than one subsequent gate. A read-once formula is a circuit in which neither gates nor inputs have fanout. A general formula allows fanout for the inputs, but not for the gates. Note that by convention, the size of a circuit is the number of gates, whereas the size of a formula is the total number of inputs counted with multiplicity, so one must take care to avoid confusion. For example, the circuit that has 1 \textsc{or} gate with $n$ inputs has circuit size $1$, formula size $n$, and formula gate count $1$.  To help clarify this distinction, we consistently use the symbol $S$ for formula size and the symbol $G$ for both formula gate count and circuit size.

As another preliminary result on circuit evaluation, we provide a nearly tight characterization of the quantum query complexity of evaluating constant-depth circuits with bounded fanout.  We show that any constant-depth bounded-fanout circuit of size $G$ can be evaluated in $O(\min\{n, n^{1/2} G^{1/4}\})$ queries, and that there exist such circuits requiring $\tilde\Omega(\min\{n, n^{1/2} G^{1/4}\})$ queries to evaluate.

Finally, our results on the quantum query complexity of read-many formula evaluation have two purely classical applications.  First, we give lower bounds on the number of gates (rather than simply the formula size) required for a formula to compute various functions.  For example, while it is known that parity requires a formula of size $\Omega(n^2)$ \cite{Khr71}, we show that in fact any formula computing parity must have $\Omega(n^2)$ gates, which is a stronger statement. We also give similar results for several other functions.  Second, for any $\epsilon>0$, we give an example of an explicit constant-depth circuit of linear size that requires $\Omega(n^{2-\epsilon})$ gates to compute with a formula of any depth (\thm{linsize}). 

The remainder of this paper is organized as follows.  In \sec{alg} we present the algorithm for evaluating read-many formulas, and in \sec{lb} we present a matching lower bound.  In \sec{const} we study the quantum query complexity of evaluating constant-depth formulas and constant-depth bounded-fanout circuits.  We present applications of these results in \sec{app}: the lower bound for Boolean matrix product verification appears in \sec{bmpv}, and results on classical circuit complexity appear in \sec{classical}.  Finally, we conclude in \sec{conc} with a discussion of the results and some open problems.

\section{Algorithm}
\label{sec:alg}

In this section, we describe an algorithm for evaluating any formula on $n$ inputs with ${G}$ gates that makes $O(n^{1/2} {G}^{1/4})$ queries to the inputs. When ${G} = \Omega(n^2)$, the upper bound is larger than $n$, so it is more efficient to simply read the entire input instead.  Similarly, if $n^{1/2}G^{1/4}$ is larger than $\sqrt{S}$, then it is preferable to use \thm{formeval}.  Thus, overall, we find an upper bound of $O(\min\{n, \sqrt{S}, n^{1/2} {G}^{1/4}\})$ queries.

In the next section we show that this algorithm is optimal in the sense that there exists a formula with $n$ inputs, size at most $S$, and at most ${G}$ gates that cannot be evaluated asymptotically faster. Note that this does not mean that all formulas with given values of $n,S,G$ have the same query complexity; formulas with appropriate structure can sometimes be evaluated much faster.  As an extreme example, the formula $x_1 \vee \bar x_1 \vee \cdots \vee x_n \vee \bar x_n$ is the constant function $1$, so it requires no queries to evaluate.

We first describe the algorithm at a high level. We wish to evaluate a formula $f$ with $n$ inputs and ${G}$ gates, using at most $O(n^{1/2} {G}^{1/4})$ queries. Consider directly applying the formula evaluation algorithm (\thm{formeval}). If the bottommost gates of $f$ have a large fanin, say $\Omega(n)$, then the formula size can be as large as $\Omega(n{G})$. If we apply the formula evaluation algorithm to this formula, we will get an upper bound of $O(\sqrt{n{G}})$, which is larger than claimed. So suppose that the formula size of $f$ is large.

Since the formula size is large, there must be some inputs that feed into a large number of gates, i.e., inputs with high degree. Among the inputs that feed into many \textsc{or} gates, if any input is 1, then this immediately fixes the output of a large number of \textsc{or} gates, reducing the formula size. A similar argument applies to inputs that are 0 and feed into \textsc{and} gates. Thus the first step in our algorithm is to find high-degree inputs and eliminate them, reducing the formula size considerably. Using at most $O(n^{1/2} {G}^{1/4})$ queries, we show how to eliminate enough high-degree inputs that the resulting formula has formula size $O(n\sqrt{G})$. Then we use \thm{formeval} on the resulting formula to achieve the claimed bound.

More precisely, our algorithm converts a formula $f$ on $n$ inputs and ${G}$ gates into another formula $f'$ of size $n\sqrt{G}$ on the same input. The new formula $f'$ has the same output as $f$ (on the given input), and this stage makes $O(n^{1/2}{G}^{1/4})$ queries. We call this the formula pruning algorithm.

Before explaining the algorithm, we need a subroutine to find a marked entry in a string of length $n$, with good expected performance when there are many marked entries.  One can find a marked entry with $O(\sqrt{n/t})$ expected queries where $t$ is the number of marked items, even when the value $t$ is not known~\cite{BBH+98}:

\begin{lemma} \label{lem:searchmany}
Given an oracle for a string $x \in \{0,1\}^n$, there exists a quantum algorithm that outputs with high probability the index of a marked item in $x$ if it exists, making $O(\sqrt{n/t})$ queries in expectation when there are $t$ marked items. If there are no marked items the algorithm runs indefinitely.
\end{lemma}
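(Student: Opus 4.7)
The plan is to apply the exponential-search-over-Grover-iterations technique of Boyer, Brassard, H{\o}yer, and Tapp, which handles Grover search when the number of marked items $t$ is unknown. The starting point is the standard fact that running $k$ Grover iterations on a database of size $n$ with $t$ marked items and then measuring yields a marked index with probability $p_k = \sin^2((2k+1)\theta)$, where $\theta = \arcsin\sqrt{t/n}$. Crucially, one need not know $t$ in advance to execute these iterations, and a single extra query after measurement suffices to verify whether the returned index is actually marked.

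First I would establish an averaging lemma: if $k$ is drawn uniformly from $\{0,1,\ldots,m-1\}$, then using the closed form $\sum_{k=0}^{m-1}\sin^2((2k+1)\theta) = m/2 - \sin(4m\theta)/(4\sin(2\theta))$, the expected success probability $\mathbb{E}_k[p_k]$ is at least $1/4$ whenever $m \ge 1/\sin(2\theta)$. Since $\sin(2\theta) = 2\sqrt{(t/n)(1-t/n)} = \Theta(\sqrt{t/n})$ in the interesting regime $t \le n/2$ (and the case $t > n/2$ is handled trivially by classical sampling), this threshold occurs at $m = \Theta(\sqrt{n/t})$.

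The algorithm itself iteratively enlarges $m$: set $m_0 = 1$ and $m_{j+1} = \lceil c\, m_j\rceil$ for some fixed constant $c > 1$ (say $c=8/7$). In stage $j$, pick $k$ uniformly from $\{0,\ldots,m_j-1\}$, perform $k$ Grover iterations, measure to obtain an index $i$, query $x_i$, and output $i$ if $x_i = 1$; otherwise proceed to stage $j+1$. If $t = 0$, the verification step always fails and the algorithm never halts, as required. For the expected cost when $t \ge 1$, let $J^\star$ be the smallest stage with $m_{J^\star} \ge 1/\sin(2\theta)$, so $c^{J^\star} = \Theta(\sqrt{n/t})$. Stages $0,\ldots,J^\star-1$ collectively use $O\bigl(\sum_{j<J^\star} m_j\bigr) = O(c^{J^\star}) = O(\sqrt{n/t})$ queries regardless of outcome. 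From stage $J^\star$ onward each round succeeds with probability at least $1/4$, so the expected number of additional rounds is $O(1)$ and each costs $O(\sqrt{n/t})$ queries, giving the claimed expected bound. To guarantee high probability of success when a marked item exists, one continues running stages past $J^\star$ until the cumulative failure probability drops below any desired constant, which only multiplies the cost by a constant.

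The main difficulty I anticipate is not conceptual but bookkeeping: verifying that the \emph{expected} query count is $O(\sqrt{n/t})$ — rather than merely bounding the worst-case cost per level — requires pairing the geometric schedule carefully with the averaging bound, so that the early rounds (which have poor success probability) do not inflate the running time. Handling the boundary regime $t = \Omega(n)$ cleanly, where a handful of classical random queries outperforms any Grover schedule, is a small but necessary piece of the argument.
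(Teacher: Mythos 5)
The paper does not prove this lemma itself; it cites the result directly to Boyer, Brassard, H{\o}yer, and Tapp~\cite{BBH+98}. Your proposal is a correct reconstruction of that BBHT argument: the exponentially growing schedule with a uniformly random iteration count at each stage, the averaging identity giving expected success probability at least $1/4$ once $m \ge 1/\sin(2\theta)$, the verification query that makes the output always correct upon halting (and makes the algorithm run forever when $t=0$), and the split of the expected cost into the pre-threshold prefix (total $O(c^{J^\star})$) plus a geometrically damped tail. One bookkeeping remark worth making explicit: after stage $J^\star$, stage $j$ costs $\Theta(c^j)$ and is reached with probability at most $(3/4)^{j-J^\star}$, so the tail expectation is $O(c^{J^\star})\sum_{i\ge 0}(3c/4)^i$, which converges only when $c < 4/3$. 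This is the reason the \emph{late} rounds, not the cheap early ones, are the part requiring care in pairing the schedule with the averaging bound; your choice $c=8/7$ does satisfy $c<4/3$, so the argument goes through.
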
 

We are now ready to state our formula pruning algorithm.

\begin{lemma}[Formula pruning algorithm]
\label{lem:pruning}
Given a formula $f$ with $n$ inputs and ${G}$ gates, and an oracle for the input $x$, there exists a quantum algorithm that makes $O(n^{1/2}{G}^{1/4})$ queries and returns a formula $f'$ on the same input $x$, such that $f'(x) = f(x)$ and $f'$ has formula size $O(n\sqrt{G})$.
\end{lemma}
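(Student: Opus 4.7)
The plan is algorithmic: construct a quantum procedure that finds and eliminates \emph{dominant} literal occurrences of $f$, thereby reducing $f$ to a smaller formula $f'$ with the same value on $x$. A literal occurrence of input $x_i$ beneath a parent gate $g$ is called dominant if the value of $x_i$ forces $g$ to a constant---for example $x_i=1$ beneath an OR gate, or $x_i=0$ beneath an AND gate (and analogously for negated literals). Whenever a dominant literal at $g$ is located, the subtree rooted at $g$ can be replaced by the corresponding constant, which then propagates upward through $g$'s ancestors and potentially triggers further collapses. Fixing a fanin threshold $k = n/\sqrt{G}$, once every gate has literal-fanin $\phi(g) < k$, the formula size is at most $Gk = n\sqrt{G}$, matching the desired bound on $f'$.

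I would structure the algorithm around two cooperating phases. The \emph{classical phase} queries each input whose current degree exceeds a threshold $D$: each such query costs one oracle call but eliminates at least $D$ literal occurrences, and at most $\min\{n, S/D\}$ such inputs exist at any time. The \emph{quantum phase} then processes each gate $g$ with $\phi(g) \ge k$ by applying \lem{searchmany} to its literal children, either locating a dominant one in $O(\sqrt{\phi(g)/t})$ expected queries (where $t \ge 1$ is the number of dominant children) or, via Grover-style amplification in $O(\sqrt{\phi(g)})$ queries, certifying that none exists. In the first case the subtree at $g$ is pruned; in the second, every literal child of $g$ is non-dominant on $x$ and can be removed from $g$ without changing its output. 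Either outcome eliminates $\ge \phi(g) \ge k$ leaves per gate processed.

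The central obstacle is controlling the combined query cost. A naive Cauchy--Schwarz bound on $\sum_{g:\phi(g)\ge k} \sqrt{\phi(g)}$ for the quantum phase gives only $O(S/\sqrt{k})$, which exceeds $O(n^{1/2}G^{1/4})$ when $S$ is much larger than $n$; conversely, $D$ cannot be taken so small that the classical phase alone blows the budget. The fix is to interleave the two phases and balance their thresholds: the classical phase first drives $S$ down to a regime where the residual quantum workload is within budget, and the quantum phase then handles the gates that remain high-fanin. Choosing $D$ and $k$ in tandem (refining both across rounds if necessary, so that the largest simplifications are performed first and the surviving formula progressively contracts) keeps the total query cost at $O(n^{1/2}G^{1/4})$, while the stopping criterion $\phi(g) < k$ guarantees final size $O(n\sqrt{G})$. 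Correctness of the returned $f'$ follows by induction on the sequence of simplifications, since every prune or literal-removal step preserves $f$'s value on the specific input $x$.
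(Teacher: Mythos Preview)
Your framework is sound in spirit, but the algorithm as sketched does not achieve the claimed query bound, and the ``interleaving with refined thresholds'' patch is not worked out enough to close the gap you yourself flag. Concretely, consider a formula in which every one of the $G$ gates has all $n$ input bits as literal children and, on the given input $x$, no literal child of any gate is dominant. Then your quantum phase must process each gate at cost $\Theta(\sqrt{n})$, for a total of $\Theta(G\sqrt{n})$ queries, which exceeds $n^{1/2}G^{1/4}$ whenever $G$ is nonconstant. Your classical phase cannot rescue this: every input has degree $G$, so every input is above any threshold $D$ you might set, and you can afford to query only $n^{1/2}G^{1/4}\ll n$ of them. More generally, after any classical phase with threshold $D$ the residual size is at most $nD$, and your own estimate for the quantum phase becomes $S/\sqrt{k}=nD\cdot G^{1/4}/n^{1/2}$; forcing this under $n^{1/2}G^{1/4}$ requires $D=O(1)$, which makes the classical phase cost $\Theta(n)$. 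No interleaving of the two phases with these primitives escapes this arithmetic.

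The paper avoids the obstacle by changing what is searched. Instead of Grover-searching \emph{inside each high-fanin gate}, it Grover-searches \emph{over the $n$ input variables} for one that is simultaneously high-degree (feeds into more than $\sqrt{G}$ \textsc{or} gates) and has value $1$; the threshold is on the \emph{degree of an input}, not on the fanin of a gate. Each hit collapses at least $\sqrt{G}$ \textsc{or} gates, so there can be at most $\sqrt{G}$ rounds. If $m_j$ denotes the number of remaining marked high-degree inputs at round $j$, then the $m_j$ are strictly decreasing with $m_{k-1}\ge 1$, hence $m_{k-r}\ge r$, and the total expected cost telescopes as $\sum_{j}\sqrt{n/m_j}\le\sum_{r}\sqrt{n/r}=O(\sqrt{n\sqrt{G}})=O(n^{1/2}G^{1/4})$. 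Repeating for \textsc{and} gates (searching for high-degree inputs with value $0$) leaves every input with at most $\sqrt{G}$ wires to each gate type, hence size $O(n\sqrt{G})$. The crucial difference from your scheme is that one global search over $n$ inputs amortizes across all gates simultaneously, and progress is measured in \emph{gates eliminated} rather than \emph{literal occurrences eliminated}; this is exactly what keeps the round count at $\sqrt{G}$ instead of scaling with $S$.
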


\begin{proof}
First consider the set of all inputs that feed into \textsc{or} gates. From these inputs and \textsc{or} gates, we construct a bipartite graph with the inputs on one side and \textsc{or} gates on the other. We put an edge between an input bit and all of the \textsc{or} gates that take it as input. An input is called a {\em high-degree} input if it has degree greater than $\sqrt{G}$.

Now repeat the following process. First, use \lem{searchmany} to find any marked high-degree input. All \textsc{or} gates connected to this marked input have output 1, so we delete these gates and their input wires and replace the gates with the constant 1. This input is removed from the set of inputs for the next iteration. If all the \textsc{or} gates have been deleted or if there are no more high-degree inputs, the algorithm halts.

Say the process repeats $k-1$ times and in the \th{k} round there are no remaining marked high-degree inputs, although there are high-degree inputs. Then the process will remain stuck in the search subroutine (\lem{searchmany}) in the \th{k} round. Let the number of marked high-degree inputs at the \th{j} iteration of the process be $m_j$. Note that $m_j > m_{j+1}$ since at least 1 high-degree input is eliminated in each round. However it is possible that more than 1 high-degree input is eliminated in one round since the number of \textsc{or} gates reduces in each round, which affects the degrees of the other inputs. Moreover, in the last round there must be at least 1 marked item, so $m_{k-1} \geq 1$. Combining this with $m_j > m_{j+1}$, we get $m_{k-r} \geq r$.

During each iteration, at least $\sqrt{G}$ \textsc{or} gates are deleted. Since there are fewer than ${G}$ \textsc{or} gates in total, this process can repeat at most $\sqrt{G}$ times before we learn the values of all the \textsc{or} gates, which gives us $k \leq \sqrt{G}+1$.

In the \th{j} iteration, finding a marked input requires $O(\sqrt{n/m_j})$ queries in expectation by \lem{searchmany}. Thus the total number queries made in expectation until the \th{k} round, but not counting the \th{k} round itself, is
\begin{equation}
\sum_{j=1}^{k-1}O\(\sqrt{\frac{n}{m_j}}\) \leq  \sum_{j=1}^{k-1}O\(\sqrt{\frac{n}{j}}\)
\leq O(\sqrt{nk}) \leq O(n^{1/2}{G}^{1/4}).
\end{equation}

So in expectation, $O(n^{1/2}{G}^{1/4})$ queries suffice to reach the \th{k} round, i.e.,  to reach a stage where there are no high-degree marked inputs. To get an algorithm with worst-case query complexity $O(n^{1/2}{G}^{1/4})$, we simply halt this algorithm after it has made some constant times its expected number of queries. This gives a bounded-error algorithm with the same worst-case query complexity.

Next, we repeat the same process with \textsc{and} gates while searching for high-degree inputs that are zero. This also requires the same number of queries. At the end of both these steps, each input has at most $\sqrt{G}$ outgoing wires to \textsc{or} gates and at most $\sqrt{G}$ outgoing wires to \textsc{and} gates. This yields a formula $f'$ of size $O(n\sqrt{G})$ on the same inputs. 
\end{proof}

Combining this lemma with the formula evaluation algorithm gives the following.

\begin{theorem}\label{thm:alg}
The bounded-error quantum query complexity of evaluating a formula with $n$ inputs, size $S$, and ${G}$ gates is $O(\min\{n, \sqrt{S}, n^{1/2} {G}^{1/4}\})$.
\end{theorem}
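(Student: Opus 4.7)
The plan is to derive the three terms in the minimum separately and combine them. The first term, $O(n)$, is trivial: any function on $n$ distinct input bits can be evaluated classically by reading all inputs, so $n$ queries always suffice. The second term, $O(\sqrt{S})$, is exactly \thm{formeval}, applied to the given formula treated as a read-once formula on $S$ (possibly repeated) inputs. The only real content is the third bound, $O(n^{1/2} G^{1/4})$, and this follows immediately by pipelining \lem{pruning} with \thm{formeval}.

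More concretely, I would run the formula pruning algorithm of \lem{pruning} on the given formula $f$ and input $x$. By the lemma, this uses $O(n^{1/2} G^{1/4})$ queries and produces (with bounded error) a formula $f'$ on the same oracle input $x$ with $f'(x)=f(x)$ and formula size $O(n\sqrt{G})$. I then invoke \thm{formeval} on $f'$ with the same oracle, which evaluates $f'(x)$ using $O\bigl(\sqrt{n\sqrt{G}}\bigr)=O(n^{1/2} G^{1/4})$ additional queries. Summing the two stages gives $O(n^{1/2} G^{1/4})$ queries to output $f(x)$ with bounded error, as required.

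Finally, the overall algorithm is to run whichever of the three strategies has the smallest query cost for the given $n, S, G$: read all $n$ inputs, apply \thm{formeval} directly, or apply pruning followed by \thm{formeval}. Since the parameters $n, S, G$ are known from the description of the formula, this choice can be made in advance without any queries, yielding the stated bound $O(\min\{n, \sqrt{S}, n^{1/2} G^{1/4}\})$.

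I do not anticipate a serious obstacle here, since both ingredients are already in place. The one technical point worth double-checking is that the error from the pruning stage and the error from the read-once evaluation stage compose cleanly into a single bounded-error algorithm; this is routine (both subroutines succeed with constant probability, which can be amplified by standard repetition, and their outputs are combined by composition rather than by any more delicate reduction).
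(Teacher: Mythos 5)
Your proposal is correct and matches the paper's proof essentially step for step: read all $n$ inputs, apply \thm{formeval} directly to get $O(\sqrt S)$, or run \lem{pruning} and then apply \thm{formeval} to the resulting formula of size $O(n\sqrt G)$, and take whichever strategy is cheapest given the known parameters. The remark about composing the bounded-error subroutines is a sensible sanity check, though the paper treats it as implicit.
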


\begin{proof}
We present three algorithms, with query complexities $O(n)$, $O(\sqrt S)$, and $O(n^{1/2} G^{1/4})$, which together imply the desired result.
Reading the entire input gives an $O(n)$ upper bound, and \thm{formeval} gives an upper bound of $O(\sqrt S)$.
Finally, we can use \lem{pruning} to convert the given formula to one of size $O(n\sqrt{G})$, at a cost of $O(n^{1/2} {G}^{1/4})$ queries.  \thm{formeval} shows that this formula can be evaluated using $O(n^{1/2} {G}^{1/4})$ queries.  \end{proof}

Let us return to the observation that our algorithm does better than the naive strategy of directly applying \thm{formeval} to the given formula with ${G}$ gates on $n$ inputs, since its formula size could be as large as $O(n{G})$, yielding a sub-optimal algorithm. Nevertheless, one might imagine that for every formula with ${G}$ gates on $n$ inputs, there exists another formula $f'$ that represents the same function and has formula size $n\sqrt{G}$. This would imply \thm{alg} directly using \thm{formeval}.

However, this is not the case: there exists a formula with ${G}$ gates on $n$ inputs such that any formula representing the same function has formula size $\Omega(n{G}/\log n)$.  This shows that in the worst case, the formula size of such a function might be close to $O(n{G})$. 

\begin{proposition}\label{prop:counting}
There exists a function $f$ that can be represented by a formula with ${G}$ gates on $n$ inputs, and any formula representing it must have formula size $S = \Omega(n{G}/\log n)$.
\end{proposition}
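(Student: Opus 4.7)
The plan is a standard counting argument in the Shannon--Lupanov style: exhibit many distinct Boolean functions each computable by a formula with $G$ AND/OR gates, upper bound the number of functions computable by formulas of size at most $S$, and apply pigeonhole to conclude that some function in the family needs large formula size.

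First I would construct the family. For each subset $T\subseteq\{0,1\}^n$ with $|T|=G-1$, let $f_T$ be the indicator function of $T$. The function $f_T$ is computed by the DNF obtained by OR-ing together the $G-1$ minterms corresponding to the points of $T$; this formula uses $G-1$ AND gates of fanin $n$ and one OR gate of fanin $G-1$, so it has exactly $G$ AND/OR gates (NOT gates are not counted, per the paper's convention). Distinct $T$ yield distinct functions, so the family $\{f_T\}$ contains $\binom{2^n}{G-1}$ functions, each representable by a formula with $G$ gates.

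Next I would upper bound the number $N_S$ of Boolean functions on $n$ variables computable by some formula of size at most $S$. Replacing each unbounded-fanin AND/OR by a binary tree of fanin-$2$ gates does not change the number of leaves, so up to losing nothing in formula size we may count fanin-$2$ formulas. Such a formula is a rooted binary tree with $S$ leaves and $S-1$ internal nodes; the tree shape can be chosen in at most $C_{S-1}\le 4^S$ ways, each internal node can be labelled AND or OR in $2^{S-1}$ ways, and each leaf can be labelled with one of $2n$ literals in at most $(2n)^S$ ways. Hence $N_S\le 2^{O(S\log n)}$, which bounds the number of distinct functions accessible by size-$S$ formulas.

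Finally, if every $f_T$ in the family admitted a formula of size at most $S$, pigeonhole would force $\binom{2^n}{G-1}\le N_S\le 2^{O(S\log n)}$. Using $\log_2\binom{2^n}{G-1}\ge (G-1)(n-\log_2 G)$ and restricting to the relevant regime (say $G=n^{O(1)}$, for which the claim is nonvacuous and the dominant term is $n$) gives $\log_2\binom{2^n}{G-1}=\Omega(nG)$, and so $S=\Omega(nG/\log n)$. Picking any $T$ achieving this lower bound gives the desired witness function. The only mildly delicate points are the reduction to fanin-$2$ formulas in the counting step and verifying that the chosen parameter range makes $\log_2\binom{2^n}{G-1}=\Omega(nG)$; everything else is direct manipulation of standard estimates.
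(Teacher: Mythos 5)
Your argument is correct and is essentially a fleshed-out version of the paper's proof sketch, which states the same two counting bounds (at least $\exp(\Omega(nG))$ distinct functions with $G$-gate formulas, and at most $\exp(O(S\log n))$ functions with formulas of size $S$) and concludes by pigeonhole. Your DNF construction of the $\binom{2^n}{G-1}$ indicator functions and the binary-tree count of size-$S$ formulas are reasonable ways to instantiate the two estimates, and your caveat about the parameter regime (needing, say, $G\le 2^{cn}$ so that $\log_2\binom{2^n}{G-1}=\Omega(nG)$) is a real but minor point that the paper's sketch also glosses over, and which is implicitly necessary for the statement to be non-vacuous anyway.
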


\begin{proof}[Proof sketch]
The proof is a counting argument. The total number of formulas of size $S$ is at most $\exp(O(S \log n))$, while there are at least $\exp(\Omega(n{G}))$ distinct functions with ${G}$-gate formulas. Thus $S = \Omega(n{G}/\log n)$.
\end{proof}

\section{Lower bounds}
\label{sec:lb}

We now turn to lower bounds on the quantum query complexity of evaluating read-many formulas.  We begin by proving a basic lemma on the query complexity of circuits obtained by composition, and then use this lemma (together with known quantum lower bounds) to show that the algorithm of \thm{alg} is optimal.

\subsection{Composition}
\label{sec:comp}

Recent work on the quantum adversary method has shown that quantum query complexity behaves well with respect to composition of functions \cite{Rei09,Rei11}.  Here we apply this property to characterize the query complexity of composed circuits.  By composing circuits appropriately, we construct a circuit whose depth is one less than the sum of the depths of the constituent circuits, but whose query complexity is still the product of those for the constituents.  This construction can be used to give tradeoffs between circuit size and quantum query complexity: in particular, it shows that good lower bounds for functions with large circuits can be used to construct weaker lower bounds for functions with smaller circuits.

First we note some simple transformations that can be applied to a given circuit.  Without decreasing the query complexity or increasing the depth, and at the cost of at most doubling the number of inputs and gates, we can assume that the given circuit is monotone (i.e., consists only of \textsc{and} and \textsc{or} gates, with no \textsc{not} gates), and its topmost gate is a gate of our choosing (either \textsc{and} or \textsc{or}, as desired). 

\begin{lemma}\label{lem:wlog}
Let $f$ be a circuit with $n_f$ inputs, having depth $k$ and size $G$.  Then there is a monotone circuit $f'$ with $2n_f$ inputs, size at most $2G$, depth $k$, and a topmost gate either \textsc{and} or \textsc{or} (as desired), such that $Q(f) \leq Q(f')$.  Furthermore, if $f$ is a formula, $f'$ is also a formula of the same size.
\end{lemma}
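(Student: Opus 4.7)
The plan is to apply two standard transformations to $f$ in sequence: push all negations down to the leaves using De Morgan's laws, and then eliminate those negations by doubling the number of inputs. To ensure the topmost gate is of the desired type, I would optionally pass from $f$ to $\neg f$ first, using the fact that $Q(f) = Q(\neg f)$; since pushing a NOT through an AND (resp.\ OR) yields an OR (resp.\ AND), exactly one of $f$ and $\neg f$ will have the prescribed top gate after the De Morgan step.

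Next, I would apply the identities $\neg(a \wedge b) = \neg a \vee \neg b$ and $\neg(a \vee b) = \neg a \wedge \neg b$ repeatedly to move every \textsc{not} gate down to a leaf. This rewriting swaps some \textsc{and} gates with \textsc{or} gates but changes neither the total \textsc{and}/\textsc{or} count, nor the depth, nor the number of inputs. Then I would introduce $n_f$ fresh input bits $y_1, \ldots, y_{n_f}$ intended as copies of $\bar x_1, \ldots, \bar x_{n_f}$, and replace each occurrence of $\neg x_i$ in the circuit by $y_i$. The resulting circuit $f'$ on $2n_f$ inputs is monotone, still has depth $k$, has at most $G$ (hence at most $2G$) \textsc{and}/\textsc{or} gates, and has the specified top gate. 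If $f$ was a formula (a tree of gates), each step preserves the tree structure, so $f'$ is also a formula with the same gate count.

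For the inequality $Q(f) \leq Q(f')$, the key observation is that on the promise $y_i = \bar x_i$ for all $i$, the circuit $f'$ computes $f$ (or $\neg f$, whose query complexity agrees with that of $f$). Thus any bounded-error quantum algorithm $\mathcal{A}$ for $f'$ yields one for $f$: simulate each query to $y_i$ by one query to the $x_i$ oracle followed by a bit flip, which is a unitary operation and so extends cleanly to the quantum query model. This gives a bounded-error algorithm for $f$ with the same query cost, proving the bound. I do not anticipate a real obstacle here — every step is a routine preprocessing move — the only point requiring care is that the simulation of $y_i$-queries must be carried out coherently, which is immediate since each $y_i$-query is replaced by a single $x_i$-query composed with a unitary phase/bit flip.
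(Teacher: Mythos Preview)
Your approach is the same as the paper's: push \textsc{not} gates to the leaves via De Morgan, replace negated literals by fresh inputs, and (optionally) pass to $\neg f$ to control the top gate.

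One correction, though: your claim that the De Morgan step ``changes neither the total \textsc{and}/\textsc{or} count'' holds only for formulas. In a circuit with gate fanout, a single gate $g$ may feed some successors directly and others through a \textsc{not}; after pushing negations down you need both $g$ and its De Morgan dual as separate gates, so the \textsc{and}/\textsc{or} count can genuinely increase, up to a factor of two. This is precisely why the paper says the step ``at most doubles the number of gates.'' Your stated bound of at most $2G$ is therefore correct, but the justification ``at most $G$ (hence at most $2G$)'' is not. Also, the lemma's ``same size'' for the formula case refers to formula size (leaf count with multiplicity), not gate count; both are indeed preserved, but you should state the former.
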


\begin{proof}
First observe that any \textsc{not} gates in the circuit $f$ can be pushed to the inputs using De Morgan's laws. This at most doubles the number of gates.  Then, if $f$ has input variables $x_1,\ldots,x_{n_f}$, let $f'$ have the $2n_f$ inputs $x_1, \bar x_1, \ldots, x_{n_f}, \bar x_{n_f}$, so that it is unnecessary to apply \textsc{not} gates to the inputs. Now $Q(f) \leq Q(f')$ because any algorithm for $f'$ can be converted to an algorithm for $f$ using the same number of queries.

To switch the output gate from $\textsc{and}$ to \textsc{or} or vice versa, simply consider the negation of $f$.  The resulting function has the same query complexity, but the output gate is switched.
\end{proof}

Now we are ready to prove the composition lemma.  

\begin{lemma}\label{lem:comp}
Let $f$ be a circuit with $n_f$ inputs, having depth $k_f$ and size $G_f$; and let $g$ be a circuit with $n_g$ inputs, having depth $k_g$ and size $G_g$.  Then there exists a circuit $h$ with $n_h = 4 n_f n_g$ inputs, having depth $k_h = k_f + k_g - 1$ and size $G_h \leq 2 G_f + 4 n_f G_g$, such that $Q(h) = \Omega(Q(f)Q(g))$.  Furthermore, if $f$ is a formula and $k_g=1$, then $h$ is a formula of size $S_h = S_f S_g$.
\end{lemma}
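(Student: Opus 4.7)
The plan is to apply \lem{wlog} to both $f$ and $g$, compose them by substitution, and save one level of depth by merging compatible gates; the query complexity bound will then follow from Reichardt's composition theorem \cite{Rei11}.

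First I would use \lem{wlog} to obtain monotone circuits $f'$ with $2n_f$ inputs, size at most $2G_f$, depth $k_f$, and $g'$ with $2n_g$ inputs, size at most $2G_g$, depth $k_g$, with the additional freedom to pick the topmost gate type of each. I would then build $h$ by substituting each of the $2n_f$ inputs of $f'$ with a fresh copy of $g'$ (or $\bar g'$, which flips the top gate type). This gives $n_h = 2n_f \cdot 2n_g = 4 n_f n_g$ inputs and $G_h \leq 2G_f + 2n_f \cdot 2G_g = 2G_f + 4 n_f G_g$ gates by direct counting.

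The depth $k_h = k_f + k_g - 1$ is obtained by saving one level on every root-to-leaf path: the topmost gate of each inserted copy of $g'$ is merged into the bottom gate of $f'$ that it feeds, which is legal whenever both are the same type (then they fuse into one gate taking the union of inputs). Matching is arranged by choosing the top gate of each $g'$ copy via \lem{wlog}, and by swapping $g'$ for $\bar g'$ when we need to flip the type. For the query complexity, Reichardt's composition theorem gives $Q(h) = \Omega(Q(f') Q(g'))$, and combined with $Q(f') \geq Q(f)$ and $Q(\bar g') = Q(g') \geq Q(g)$ from \lem{wlog}, this yields $Q(h) = \Omega(Q(f) Q(g))$; the substitution of $\bar g'$ for $g'$ at some inputs preserves the bound since both functions have the same query complexity.

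For the Furthermore clause ($f$ a formula and $k_g = 1$), the construction simplifies: $g$ is already a single gate, so no doubling of $g$ is needed, and each of the $S_f$ leaves of the formula $f$ is directly replaced by a fresh copy of $g$ or $\bar g$ whose single gate matches the parent in $f$, yielding a formula of size exactly $S_f \cdot n_g = S_f S_g$ and depth $k_f$. Since $f$ is a formula every leaf has fanout one, so matching proceeds with no conflict.

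The main obstacle I expect is the general circuit case, where fanout in $f'$ may cause one input to feed into bottom gates of mixed types, so no single choice of topmost type for the attached copy of $g'$ permits merging at every destination; handling this within the stated size bound (by either splitting such inputs into gate-type-specific copies of $g'$ or by a more careful argument that forces merging on every root-to-leaf path) is the key technical point.
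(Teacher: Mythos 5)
Your proposal takes essentially the same route as the paper's proof: apply \lem{wlog} to obtain monotone versions of $f$ and $g$, substitute copies of $g'$ for the inputs of $f'$, merge the top gate of each $g'$ copy into the level-$k_f$ gate of $f'$ that it feeds to save one level of depth, count inputs and gates directly, and invoke Theorem 1.5 of \cite{Rei11} for the query lower bound. The one place you diverge is in how you arrange the merge: you propose choosing the top gate type of each $g'$ copy locally (swapping $g'$ for $\bar g'$ per input as needed), whereas the paper fixes a single top gate type for $g$ and asserts, without further justification, that one may assume ``the gates of $f$ at level $k_f$ are of the same type as the top gate of $g$.'' Your per-copy approach additionally relies on the composition theorem tolerating different inner functions; this is fine since $g'$ and $\bar g'$ have the same adversary bound, but it is a detail worth stating.

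The worry you raise at the end --- that fanout may cause a single input of $f'$ to feed bottom-level gates of both types --- is a real subtlety, and in fact the paper's ``without loss of generality'' claim quietly sweeps the same issue under the rug (putting a general circuit into a form where all deepest gates share a type is not obviously free). However, it costs only a constant factor: for any such conflicted input, attach both an \textsc{and}-topped and an \textsc{or}-topped copy of $g'$, reading the \emph{same} $2n_g$ fresh input bits, and route each to the gate type it matches. The input count is unchanged at $4n_f n_g$; only the $4 n_f G_g$ term in the gate bound may double, which does not affect any asymptotic statement in the paper. So your proof is correct in substance, and your flagged obstacle is a legitimate gap shared with the original argument, resolvable by a constant-factor patch.
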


\begin{proof}
By \lem{wlog}, we can assume that $f$ and $g$ are monotone at the cost of replacing $n_f$ by $2n_f$, $G_f$ by $2G_f$, $n_g$ by $2n_g$, and $G_g$ by $2G_g$, but with no change to the depths of the circuits.  Furthermore, we can assume without loss of generality that the gates of $f$ at level $k_f$ are of the same type as the top gate of $g$.  These assumptions ensure that when we compose $f$ with $g$, the top gate of $g$ can be merged with the gates of $f$ at level $k_f$, since all such gates are of the same type.

Now let $h=f \circ (g,\ldots,g)$ be the composition of $f$ with $n_f$ copies of $g$, i.e.,
\begin{align}
  h(x_1,\ldots,x_{n_f n_g}) = f\(g(x_1,\ldots,x_{n_g}),\ldots,g(x_{n_f n_g-n_g+1},\ldots,x_{n_f n_g})\).
\end{align}
By combining adjacent gates of the same type, we have $k_h = k_f + k_g - 1$.  The expressions for the number of inputs $n_h$ and the size $G_h$ are immediate.

Theorem 1.5 of \cite{Rei11} shows that the quantum query complexity of the composed function is simply the product of the individual query complexities, up to some constant factor.

If $k_g=1$, then $g$ is simply an \textsc{and} or \textsc{or} gate, so it clearly can be composed with a formula $f$ to give a formula $h$.  Since each of the $S_f$ inputs of $f$ (counted with multiplicity) gives rise to $S_g$ inputs of $g$ (again, counted with multiplicity), the size of this formula is simply $S_h = S_f S_g$.
\end{proof}

Observe that in general, the fanout of most gates in $h$ is inherited from the corresponding gates in $f$ and $g$, except that the gates at level $k_f$ of $f$ are combined with the top gate of $g$.  For example, if $f$ is a read-once formula and $g$ is a formula (or if $f$ is a formula and $k_g=1$ as considered above), then $h$ is a formula.

\subsection{Optimality of \texorpdfstring{\thm{alg}}{Theorem \ref*{thm:alg}}}
\label{sec:tight}

In this section we give examples of functions for which the algorithm of \thm{alg} is optimal.  We obtain such functions by composing a formula for the \textsc{parity} function with \textsc{and} gates.

It is well known that the parity of $n$ bits, denoted $\textsc{parity}_n$, can be computed by a formula of size $O(n^2)$. An explicit way to construct this formula is by recursion. The parity of two bits $x$ and $y$ can be expressed by a formula of size $4$: $x \oplus y = (x \wedge \bar{y}) \vee (\bar{x} \wedge y)$. When $n$ is a power of 2, given formulas of size $n^2/4$ for the parity of the first and second half of the input, we get a formula of size $n^2$. When $n$ is not a power of 2 we can use the next largest power of 2 to obtain a formula size upper bound. Thus \textsc{parity} has a formula of size $O(n^2)$. Consequently, the number of gates in this formula is $O(n^2)$. 

This observation combined with \lem{comp} and known quantum lower bounds for \textsc{parity} gives us the following theorem.

\begin{theorem}\label{thm:tight}
For any $n,S,{G}$, there is a read-many formula with $n$ inputs, size at most $S$, and at most ${G}$ gates with bounded-error quantum query complexity $\Omega(\min\{n, \sqrt{S}, n^{1/2} {G}^{1/4}\})$.
\end{theorem}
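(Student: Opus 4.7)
The plan is to construct the required formula by composing \textsc{parity} on $m$ bits with a single \textsc{and} gate of fanin $k$, and then to choose the parameters $m$ and $k$ according to which of $n$, $\sqrt{S}$, and $n^{1/2} G^{1/4}$ attains the minimum. The advantage of this building block is that the three budgets $(n,S,G)$ depend on $m$ and $k$ through different monomials, so I can balance them independently.

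Recall that \textsc{parity} on $m$ bits has a formula of size $O(m^2)$ with $O(m^2)$ gates, as noted just before the theorem, and has quantum query complexity $\Omega(m)$ \cite{BBCMW01,FGGS98}; the \textsc{and} function on $k$ bits has query complexity $\Theta(\sqrt{k})$. Applying \lem{comp} with $f$ the parity formula on $m$ bits and $g$ a single \textsc{and} gate of fanin $k$ (so $k_g = 1$ and the composition is again a formula), I obtain a formula $h$ with $n_h = \Theta(mk)$ distinct inputs, formula size $S_h = O(m^2 k)$, gate count $G_h = O(m^2)$, and quantum query complexity $Q(h) = \Omega(m\sqrt{k})$.

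I then perform a three-case analysis on the minimum. If $n$ is the minimum (so $S, G = \Omega(n^2)$), I take $m = n$ and $k = 1$, recovering plain parity on $n$ bits, which has $\Omega(n)$ query complexity and stays within all budgets. If $\sqrt{S}$ is the minimum (so $S \le n^2$ and $G \ge S^2/n^2$), I take $m = S/n$ and $k = n^2/S$, so that $mk = n$, $m^2 k = S$, $m^2 = S^2/n^2 \le G$, and $Q(h) = m\sqrt{k} = \sqrt{S}$. If $n^{1/2} G^{1/4}$ is the minimum (so $G \le n^2$ and $G \le S^2/n^2$), I take $m = \sqrt{G}$ and $k = n/\sqrt{G}$, so that $mk = n$, $m^2 = G$, $m^2 k = n\sqrt{G} \le S$, and $Q(h) = m\sqrt{k} = n^{1/2} G^{1/4}$.

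The hard part is really just the case analysis: I must check that each of the three parameter choices simultaneously respects all three budget constraints and that the three case conditions together cover the entire parameter space. This is mechanical once the right $(m,k)$ are identified. Minor technicalities, namely rounding $m$ and $k$ to integers and (if one insists on exactly $n$ inputs rather than at most $n$) padding with dummy inputs whose values are tautologically fixed and therefore consume no queries, affect only constants and can be deferred.
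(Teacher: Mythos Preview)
Your proposal is correct and follows essentially the same approach as the paper: compose $\textsc{parity}_m$ with $\textsc{and}_k$ via \lem{comp}, then do a three-case analysis choosing $(m,k)$ exactly as you do (the paper writes $k=n/m$ rather than introducing $k$ as a separate parameter, but the choices coincide). The paper also handles the constant-factor slack in the first case by taking parity on a constant fraction of the inputs, which is the same kind of adjustment you defer at the end.
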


\begin{proof}
If $\min\{n, \sqrt{S}, n^{1/2} {G}^{1/4}\} = n$ (i.e., $S \ge n^2$ and $G \ge n^2$), then consider the \textsc{parity} function, which has $Q(\textsc{parity}_n) = \Omega(n)$~\cite{BBCMW01,FGGS98}.  Since the formula size and gate count of parity are $O(n^2)$, this function has formula size $O(S)$ and gate count $O(G)$.  By adjusting the function to compute the parity of a constant fraction of the inputs, we can ensure that the formula size is at most $S$ and the gate count is at most $G$ with the same asymptotic query complexity.

In the remaining two cases, we obtain the desired formula by composing a formula for \textsc{parity} with \textsc{and} gates.
We apply \lem{comp} with $f = \textsc{parity}_{m}$ and $g = \textsc{and}_{n/m}$ for some choice of $m$.  The resulting formula has $\Theta(n)$ inputs, size $O(m^2(n/m)) = O(nm)$, and gate count $O(m^2)$.  Its quantum query complexity is $\Omega(m\sqrt{n/m}) = \Omega(\sqrt{nm})$.

If $\min\{n, \sqrt{S}, n^{1/2} {G}^{1/4}\} = \sqrt{S}$ (i.e., $S \le n^2$ and $S \le n \sqrt{G}$), let $m = S/n$.  Then the formula size is $O(S)$ and the gate count is $O(S^2/n^2) \le O(G)$.  By appropriate choice of constants, we can ensure that the formula size is at most $S$ and the gate count is at most $G$.  In this case, the query complexity is $\Omega(\sqrt{S})$.

Finally, if $\min\{n, \sqrt{S}, n^{1/2} {G}^{1/4}\} = n^{1/2} {G}^{1/4}$ (i.e., $G \le n^2$ and $S \ge n \sqrt{G}$), let $m = \sqrt{G}$.  Then the gate count is $O(G)$ and the formula size is $O(n \sqrt{G}) \le O(S)$.  Again, by appropriate choice of constants, we can ensure that the formula size is at most $S$ and the gate count is at most $G$.  In this final case, the query complexity is $\Omega(n^{1/2} G^{1/4})$.
\end{proof}

\section{Constant-depth formulas}
\label{sec:const}

\thm{alg} and \thm{tight} together completely characterize the quantum query complexity of formulas with $n$ inputs, formula size $S$, and gate count ${G}$. However, while there exists such a formula for which the algorithm is optimal, particular formulas can sometimes be evaluated more efficiently.  Thus it would be useful to have a finer characterization that takes further properties into account, such as the depth of the formula. In this section we consider formulas of a given depth, number of inputs, size, and gate count.

Since the algorithm described in \sec{alg} works for formulas of any depth, we know that any depth-$k$ formula with ${G}$ gates can be evaluated with $O(\min\{n, \sqrt{S}, n^{1/2} {G}^{1/4}\})$ queries. However, the lower bound in \sec{lb} uses a formula with non-constant depth. Here we focus on proving lower bounds on constant-depth formulas.

Consider the \textsc{onto} function (defined in \cite{BM10}), which has a depth-3 formula and has nearly maximal query complexity.  For any positive even integer $n$, let $X_n$ be the set of functions from $[2n-2]$ to $[n]$.  Then the function $\textsc{onto}\colon X_n \to \{0,1\}$ has $\textsc{onto}(f) = 1$ iff $f$ is surjective.  To view this as a Boolean function, we can encode the $2n-2$ values $f(i) \in [n]$ in binary, giving a function of $N=(2n-2)\log n$ bits.  We sometimes use a subscript to indicate the number of input bits, writing $\textsc{onto}_N$ where $N=(2n-2)\log n$.

Beame and Machmouchi showed that the query complexity of the \textsc{onto} function is linear in the size of the range, i.e., nearly linear in the number of input bits:

\begin{proposition}[Corollary 6 of \cite{BM10}]\label{prop:onto}
  $Q(\textsc{onto}_N) = \Omega(N/\log N)$.
\end{proposition}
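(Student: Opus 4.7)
The plan is to prove the bound via Ambainis's quantum adversary method, adapted to the block-structured encoding of $f\in X_n$ into $N=(2n-2)\log n$ bits. First I would identify the hard instances. Let $A\subseteq X_n$ consist of surjective functions in which exactly $n-2$ values appear twice and exactly $2$ values appear once; let $B\subseteq X_n$ consist of almost-surjective functions in which exactly one value $j\in[n]$ is missing, exactly one value appears three times, and the remaining $n-2$ values appear twice. Restricting to these balanced subfamilies makes the counts clean while preserving the claimed lower bound.

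Next I would define the adversary relation $R\subseteq A\times B$ by placing $(f,g)\in R$ whenever $g$ is obtained from $f$ by changing $f(i)$ at exactly one position $i\in[2n-2]$: specifically, $f(i)=j$ is one of the two uniquely-appearing values in $f$, and $g(i)$ is any of the $n-2$ values appearing twice in $f$. Counting: for each $f\in A$ there are $2\cdot(n-2)=\Theta(n)$ partners $g\in B$ (two choices for the unique position to change, $n-2$ choices for the destination value), and by symmetry each $g\in B$ has $\Theta(n)$ partners $f\in A$. This gives the ``global'' ratios $|R|/|A|,\,|R|/|B|=\Theta(n)$.

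The crux is bounding the per-bit load. For a single input bit $b$ lying inside the $\log n$-bit encoding of some position $i\in[2n-2]$, the number of related pairs that differ precisely at $b$ is controlled as follows: the difference $f(i)\oplus g(i)$ is forced to equal the basis vector $e_b$, so once we fix $f$ there is at most one admissible destination value $g(i)=f(i)\oplus e_b$, and $(f,g)\in R$ requires both that $f(i)$ be uniquely-appearing in $f$ and that $f(i)\oplus e_b$ be doubled in $f$. By symmetry of the distribution on $A$ (the group $S_n$ acts on the range), the fraction of $f\in A$ for which this structural event occurs at a given $(i,b)$ is $O(1/n)$, yielding per-bit loads $|R_b|/|A|,\,|R_b|/|B|=O(1)$. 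Ambainis's bound then gives $Q(\textsc{onto}_N)=\Omega\bigl(\sqrt{n\cdot n/(1\cdot 1)}\bigr)=\Omega(n)=\Omega(N/\log N)$.

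\textbf{Main obstacle.} The delicate step is the per-bit calculation. A naive count that only uses ``$(f,g)$ differs in one of $\log n$ bits of block $i$'' loses a $\log n$ factor and yields only $\Omega(n/\log n)$ rather than the desired $\Omega(n)$. Avoiding this loss requires exploiting the fact that \emph{which} bit changes is fully determined by $f(i)$ and $g(i)$, together with a symmetrization over $S_n$ showing that for a uniformly chosen $f\in A$ the probability that $f(i)$ is unique and $f(i)\oplus e_b$ is doubled is $\Theta(1/n)$ uniformly in $b$. A viable alternative, should the adversary counting become unwieldy, is the polynomial method: symmetrize the acceptance probability over the $S_n$-action on the range, reducing it to a univariate polynomial $q(k)$ in the number $k$ of missing values, and apply a Markov-type inequality to $q$ on $\{0,1,\ldots,n\}$ using $q(0)\geq 2/3$, $q(k)\leq 1/3$ for $k\geq 1$, and boundedness elsewhere, to conclude $\deg q=\Omega(n)$ and hence $T=\Omega(n)$.
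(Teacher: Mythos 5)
The statement you are proving is not actually proved in this paper: it is cited verbatim as Corollary 6 of Beame and Machmouchi~\cite{BM10}, so there is no internal proof to compare against. Evaluated on its own terms, your adversary argument has two substantive errors that make the stated conclusion fail.

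\textbf{The relation is not balanced.} You claim that ``by symmetry each $g\in B$ has $\Theta(n)$ partners $f\in A$,'' but the relation $R$ you define is far from symmetric. Fix $g\in B$. For $(f,g)\in R$ with the differing position $i$, the value $g(i)$ must be doubled in $f$, hence tripled in $g$, and $f(i)$ must be unique in $f$, hence missing in $g$. The missing value of $g$ is unique, so $f(i)$ is forced, and $i$ must be one of the three positions of the tripled value of $g$. Thus each $g\in B$ has exactly $m'=3$ partners in $A$, not $\Theta(n)$. (As a side remark, your set $B$ as written has total multiplicity $3+2(n-2)=2n-1\neq 2n-2$; the correct $B$ must also contain a singleton value, but this does not rescue the count.) With $m=\Theta(n)$ and $m'=O(1)$, the unweighted adversary bound cannot exceed $O(\sqrt{n})$ even before accounting for per-bit loads.

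\textbf{The per-bit load calculation misreads the adversary bound.} Ambainis's quantity $\ell_{x,b}$ counts pairs $(x,y)\in R$ with $x_b\neq y_b$, meaning $x$ and $y$ \emph{differ at} bit $b$, not that they \emph{differ only at} bit $b$. Your claim that ``the difference $f(i)\oplus g(i)$ is forced to equal the basis vector $e_b$'' is therefore incorrect: for fixed $f$ and a bit $b$ in the encoding of position $i$, any doubled value $k$ of $f$ whose $b$-th bit differs from $f(i)$'s is an admissible $g(i)$, and there are $\Theta(n)$ of these. So $\ell_{f,b}=\Theta(n)$, not $O(1)$. Combined with $\ell'_{g,b}\leq 1$, one gets $\ell_{\max}=\Theta(n)$, and the bound becomes $\Omega(\sqrt{m\,m'/\ell_{\max}})=\Omega(\sqrt{n\cdot 3/n})=\Omega(1)$, which is vacuous. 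Your fallback suggestion via the polynomial method also does not go through as sketched: the $S_n$-action on the range does not reduce the symmetrized acceptance probability to a univariate polynomial in the number of missing values; the relevant invariant is the whole multiset of range multiplicities, and distinct multiplicity profiles with the same number of missing values need not be identified. Beame and Machmouchi's actual argument is considerably more delicate than the straightforward unweighted adversary you propose, and this paper relies on their result as a black box rather than reproving it.
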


Furthermore, $\textsc{onto}$ has a simple depth-$3$ formula of size $\Theta(n^2 \log n) = \tilde\Theta(N^2)$, namely \cite{BM10}
\begin{align}
  \textsc{onto}(f) = \bigwedge_{j \in [n]} \bigvee_{i \in [2n-2]} \bigwedge_{\ell=0}^{\log_2 n - 1} f(i)_\ell^{j_\ell}
  \label{eq:ontoformula}
\end{align}
where $f(i)_\ell$ is the \th{\ell} bit in the binary encoding of $f(i)$, $j_\ell$ is the \th{\ell} bit in the binary encoding of $j$, and $x^b$ is $x$ if $b=1$ and $\bar x$ if $b=0$.

Now using the \textsc{onto} function instead of \textsc{parity} in the proof of \thm{tight}, we get the following.

\begin{theorem}\label{thm:depth3}
For any $N,S,{G}$, there is a depth-3 formula with $N$ inputs, size at most $S$, and at most ${G}$ gates with quantum query complexity $\tilde\Omega(\min\{N, \sqrt{S}, N^{1/2} {G}^{1/4}\})$.
\end{theorem}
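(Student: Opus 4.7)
The plan is to follow the proof of \thm{tight} verbatim, substituting the \textsc{onto} function for \textsc{parity}. Three properties of \textsc{parity} were used there: a linear quantum lower bound, a quadratic formula size, and a quadratic gate count. The \textsc{onto} function satisfies the analogues of all three, losing only polylog factors: \prop{onto} gives $Q(\textsc{onto}_N) = \tilde\Omega(N)$, and \eq{ontoformula} exhibits a depth-$3$ formula of formula size $\tilde\Theta(N^2)$ and gate count $\tilde O(N^2)$. These polylog losses are exactly why the theorem conclusion is $\tilde\Omega$ rather than $\Omega$.

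For the composition step, I would take $f = \textsc{onto}_m$ and $g = \textsc{and}_{N/m}$ (so $k_g = 1$) and apply \lem{comp} to form $h = f \circ (g,\ldots,g)$. The merged formula has parameters $n_h = \Theta(N)$, $S_h = \tilde O(mN)$, $G_h = \tilde O(m^2)$, and $Q(h) = \tilde\Omega(m \cdot \sqrt{N/m}) = \tilde\Omega(\sqrt{mN})$, mirroring the expressions from \thm{tight}. The three-case analysis then proceeds identically: when $\min = N$ (so $S, G \gtrsim N^2$), use $f$ alone on a constant fraction of the inputs; when $\min = \sqrt S$ (so $S \le N^2$ and $S \le N\sqrt G$), take $m = S/N$, giving $S_h = \tilde O(S)$ and $G_h = \tilde O(S^2/N^2) \le \tilde O(G)$; when $\min = N^{1/2}G^{1/4}$ (so $G \le N^2$ and $S \ge N\sqrt G$), take $m = \sqrt G$, giving $G_h = \tilde O(G)$ and $S_h = \tilde O(N\sqrt G) \le \tilde O(S)$. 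In each regime, $\sqrt{mN}$ equals the target query complexity, and the polylog slack in the size and gate-count constraints is absorbed by scaling $m$ down by a polylog factor (exactly analogous to the ``constant fraction of the inputs'' trick in \thm{tight}), at the cost of only a polylog factor in the lower bound.

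The one non-routine check is that the composition actually preserves depth $3$. The depth-saving identity $k_h = k_f + k_g - 1$ in \lem{comp} requires the bottommost gates of $f$ to merge with the top gate of $g$, and this in turn requires the gate types to match. Here it is crucial that the formula in \eq{ontoformula} already has \textsc{and} gates at the bottom, so composing with $g = \textsc{and}$ produces the desired merger without any preliminary rewriting; if we had tried to compose with \textsc{or} gates instead, we would first have had to flip the bottom layer via \lem{wlog}. This inspection of the specific shape of the \textsc{onto} formula (rather than just its generic black-box properties) is the main point where the depth-$3$ hypothesis is actually used.
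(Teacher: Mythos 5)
Your proposal is correct and follows the paper's approach exactly; indeed, the paper's entire proof is the one-sentence remark that one should substitute \textsc{onto} for \textsc{parity} in the proof of \thm{tight}, and your write-up is precisely the detailed unwinding of that substitution. Your observation that the bottom-level gates of the \textsc{onto} formula in \eq{ontoformula} are already \textsc{and} gates---so the composition with $g=\textsc{and}$ collapses to depth $3$ without needing \lem{wlog} to flip a layer---is a correct and worthwhile detail that the paper leaves implicit.
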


This gives us a matching lower bound, up to log factors, for any $k \geq 3$. Thus we have a nearly tight characterization of the query complexity of evaluating depth-$k$ formulas with $n$ inputs, size $S$, and ${G}$ gates for all $k \ge 3$.
Furthermore, since any depth-1 formula is either the \textsc{and} or \textsc{or} of a subset of the inputs, the query complexity of depth-1 formulas is easy to characterize. Thus we have a characterization for all depths other than depth 2, and it remains to consider the depth-2 case.
Since all depth-2 circuits are formulas, we will refer to depth-2 formulas as depth-2 circuits henceforth and use circuit size (i.e., the number of gates) as the size measure.

There are also independent reasons for considering the query complexity of depth-2 circuits. Improved lower bounds on depth-2 circuits of size $n$ imply improved lower bounds for the Boolean matrix product verification problem. We explain this connection and exhibit new lower bounds for the Boolean matrix product verification problem in \sec{bmpv}. 

Furthermore, some interesting problems can be expressed with depth-2 circuits, and improved upper bounds for general depth-2 circuits would give improved algorithms for such problems.  For example, the graph collision problem~\cite{MSS07} for a graph with $G$ edges can be written as a depth-2 circuit of size $G$. Below, we exhibit a lower bound for depth-2 circuits that does not match the upper bound of \thm{alg}. If there exists an algorithm that achieves the query complexity of the lower bound, this would improve the best known algorithm for graph collision, and consequently the triangle problem \cite{MSS07}.

We begin with lower bounds for depth-2 circuits. The trivial lower bound for depth-2 circuits is $\Omega(\sqrt{n})$ due to the \textsc{or} function.  In this section we obtain better lower bounds for depth-2 circuits using the element distinctness problem. 

In the element distinctness problem, we are given a string $x_1x_2\ldots x_n \in [n]^n$ of length $n$ over an alphabet of size $n$, and we are asked if there are two positions in the string that are equal, i.e., whether there exist $i,j \in [n]$ with $i \neq j$ such that $x_i = x_j$. Solving this problem requires $\Omega(n^{2/3})$~\cite{AS04,Ambainis05,Kutin05} queries to an oracle that returns $x_i$ when queried with $i$.

To express this problem as a Boolean function, we represent the $n$ inputs in binary using ${\log n}$ bits. The size of the input is $N = n \log n$; in terms of $N$, the lower bound is $\Omega((N/\log N)^{2/3})$. Observe that the element distinctness problem can be represented by a depth-2 circuit of size $O(n^3)$. One way to see this is by noting that we can check the condition $x_i = x_j = k$ for any $i,j,k \in [n]$ using 1 \textsc{and} gate and some \textsc{not} gates. Then we just take the \textsc{or} of $\binom{n}{2} n = O(n^3)$ such clauses to check whether two distinct inputs map to the same output. This immediately gives the following:

\begin{theorem}\label{thm:depth2}
There exists a depth-2 circuit of size $O((N/\log N)^3)$ that requires $\Omega((N/\log N)^{2/3})$ quantum queries to evaluate.
\end{theorem}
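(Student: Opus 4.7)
The plan is to realize the element distinctness function on $n$ inputs (each of which is a value in $[n]$ encoded in $\log n$ binary bits, so $N = n\log n$) by the depth-2 circuit sketched in the paragraph preceding the statement, and then pull back the $\Omega(n^{2/3})$ lower bound from \cite{AS04,Ambainis05,Kutin05}.

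Concretely, I would first fix notation: write each input value $x_i \in [n]$ as a block of $\log n$ Boolean variables $x_i^{(0)},\dots,x_i^{(\log n-1)}$, and for each target value $k \in [n]$ with binary expansion $k_0 \cdots k_{\log n-1}$, define the $\log n$ literals $(x_i^{(\ell)})^{k_\ell}$ (where $y^1 = y$ and $y^0 = \bar y$). Then for every ordered pair $i < j$ in $[n]$ and every $k \in [n]$, introduce a single bottom-level \textsc{and} gate of fan-in $2\log n$ that takes the conjunction of the $\log n$ literals checking $x_i = k$ together with the $\log n$ literals checking $x_j = k$. Feeding all $\binom{n}{2}\cdot n = O(n^3)$ of these \textsc{and} gates into one top-level \textsc{or} gate produces a depth-2 circuit whose output is $1$ iff some pair of positions collides, which is exactly element distinctness.

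Next I would verify the two quantitative bounds. For size, the circuit has $\binom{n}{2}n$ \textsc{and} gates plus one \textsc{or} gate, giving $O(n^3)$ gates; substituting $n = N/\log N$ (up to the standard inversion of $N = n \log n$) yields $O((N/\log N)^3)$. For query complexity, any quantum algorithm evaluating the circuit immediately solves element distinctness on $n$ values over alphabet $[n]$, which requires $\Omega(n^{2/3}) = \Omega((N/\log N)^{2/3})$ queries to the binary-encoded input (a query to one of the $N$ input bits is no more powerful than a symbol query, since a symbol query can be simulated by $\log n$ binary queries and the lower bound in the symbol model is already $\Omega(n^{2/3})$; the binary bound matches up to the $\log N$ factor inside the bracket).

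There is no real obstacle here beyond these bookkeeping points. The only thing worth a moment of care is the translation between the input-size parameter $n$ used in the element distinctness lower bound and the Boolean-input-size parameter $N = n\log n$ used in the statement, since both the size and the query lower bound are expressed in terms of $N/\log N$; once that substitution is made the theorem is immediate.
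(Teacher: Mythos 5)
Your proof is correct and follows essentially the same route as the paper: express element distinctness as a monotone-top \textsc{or} of $\binom{n}{2}n$ depth-1 clauses checking $x_i = x_j = k$, then invoke the $\Omega(n^{2/3})$ element distinctness lower bound and rewrite both bounds in terms of $N = n\log n$. The one small point worth tightening is the parenthetical about query models: the relevant direction is simply that a bit query gives less information than a symbol query, so the symbol-model lower bound $\Omega(n^{2/3})$ transfers directly to the bit model without any $\log$ loss (your aside about simulating a symbol query with $\log n$ bit queries argues the reverse, unneeded, direction).
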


For the application to Boolean matrix product verification, we need lower bounds on depth-2 circuits with $n$ gates. Such a lower bound is easy to obtain from \thm{depth2} using \lem{comp}:

\begin{corollary}\label{cor:depth2}
There exists a depth-2 circuit on $n$ inputs of size $n$ that requires $\tilde\Omega(n^{5/9}) = \Omega(n^{0.555})$ quantum queries to evaluate.
\end{corollary}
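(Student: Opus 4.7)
The plan is to obtain the desired depth-2 circuit by applying the composition lemma \lem{comp} to the depth-2 element distinctness circuit supplied by \thm{depth2} and a single \textsc{and} gate, choosing the inner and outer sizes to balance the overall input count against the overall gate count.

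Concretely, let $f$ be the depth-2 element distinctness circuit on $N$ bits from \thm{depth2}, which has $\tilde O(N^3)$ gates and $Q(f) = \tilde\Omega(N^{2/3})$, and let $g = \textsc{and}_M$, a depth-1 circuit with one gate and $Q(g) = \Omega(\sqrt{M})$. The circuit $f$ is a DNF whose bottom-level gates are \textsc{and}s, matching the top gate of $g$, so \lem{comp} produces a depth-2 circuit $h$ with
\[
  n_h = 4NM, \qquad G_h \le 2G_f + 4NG_g = \tilde O(N^3), \qquad Q(h) = \tilde\Omega\bigl(N^{2/3}\sqrt{M}\bigr).
\]
Taking $N = \tilde\Theta(n^{1/3})$ (slightly below $n^{1/3}$ to absorb the logarithmic factors hidden in $G_f$) and $M = \Theta(n/N) = \tilde\Theta(n^{2/3})$ forces $n_h = \Theta(n)$ and $G_h = O(n)$, and substituting into the query bound gives
\[
  Q(h) = \tilde\Omega\bigl(n^{2/9}\cdot n^{1/3}\bigr) = \tilde\Omega(n^{5/9}).
\]
A routine padding step (adding dummy inputs or trivial tautologies) then makes the input count exactly $n$ and the gate count at most $n$.

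No new lower-bound machinery is required: the result is an immediate consequence of composing a known constant-depth hard function with an \textsc{and} gate and tuning a single parameter, which is exactly the scenario \lem{comp} was designed to handle. The only point requiring any care is the polylogarithmic bookkeeping from the $O((N/\log N)^3)$ size and $\Omega((N/\log N)^{2/3})$ query bounds of \thm{depth2}, but these factors are readily swept into the $\tilde\Omega$ in the final bound.
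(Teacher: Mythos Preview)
Your proposal is correct and follows essentially the same approach as the paper: compose the depth-2 element distinctness circuit with an \textsc{and} gate via \lem{comp}, set the outer size to roughly $n^{1/3}$ and the inner fan-in to roughly $n^{2/3}$, and absorb the logarithmic slack into the $\tilde\Omega$. The only cosmetic difference is that you explicitly note the DNF structure of $f$ to justify the depth merge, whereas the paper simply invokes \lem{comp} (which already handles this via \lem{wlog}).
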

\begin{proof}
Let $f$ be the element distinctness function. We know that $k_f = 2$ and $G_f = \tilde O(n_f^3)$. Let $g$ be the \textsc{and} function, which has $k_g = 1$ and $G_g = 1$. Composing these functions using \lem{comp} gives a depth-2 circuit $h$ with $n_h = 4 n_f n_g$, $G_h = \tilde O(n_f^3)$, and $Q(h) = \Omega(n_f^{2/3} \sqrt{n_g})$.  Choosing $n_f = n_h^{1/3}$ and $n_g = n_h^{2/3}$, we have $G_h = \tilde O(n_h)$ and $Q(h) = \tilde\Omega(n^{5/9})$.  Adjusting $n_f$ and $n_g$ by logarithmic factors, we can set $G_h = n_h$ with only a logarithmic adjustment to the query complexity.
\end{proof}

The results of this section also allow us to characterize (up to log factors) the quantum query complexity of bounded-fanout \ACzero circuits, which we call \ACzerob circuits. \ACzerob circuits are like \ACzero circuits where the gates are only allowed to have $O(1)$ fanout, as opposed to the arbitrary fanout that is allowed in \ACzero. Note that as complexity classes, \ACzero and \ACzerob are equal, but the conversion from an \ACzero circuit to an \ACzerob circuit will in general increase the circuit size.

The reason that formula size bounds apply to \ACzerob circuits is that an \ACzerob circuit can be converted to formula with a constant factor increase in size. (However, this constant depends exponentially on the depth.) Thus we have the following corollary:

\begin{corollary}\label{cor:ac0b}
Any language $L$ in \ACzerob has quantum query complexity $O(\min\{n, n^{1/2} G^{1/4}\})$ where $G(n)$ is the size of the smallest circuit family that computes $L$. Furthermore, for every $G(n)$, there exists a language in \ACzerob that has circuits of size $G(n)$ and that requires $\tilde\Omega(\min\{n, n^{1/2} G^{1/4}\})$ quantum queries to evaluate.
\end{corollary}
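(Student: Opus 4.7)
The plan is to obtain both directions as essentially direct consequences of results already in hand: the upper bound from \thm{alg} after the standard circuit-to-formula conversion, and the lower bound as an immediate special case of \thm{depth3}.

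For the upper bound, given an \ACzerob circuit $C$ of constant depth $k$ and size $G$, I would unfold $C$ into an equivalent formula $F$ by duplicating each gate once for every use of its output. Since every gate in $C$ has fanout at most some constant $c$ and $C$ has depth at most $k$, each original gate produces at most $c^k = O(1)$ copies in $F$, so $F$ has gate count $G_F = O(G)$. Applying \thm{alg} to $F$ then yields
\[
  Q(F) \le O\bigl(\min\{n, \sqrt{S_F}, n^{1/2} G_F^{1/4}\}\bigr) \le O\bigl(\min\{n, n^{1/2} G^{1/4}\}\bigr),
\]
where the last inequality simply drops the $\sqrt{S_F}$ term and uses $G_F = O(G)$.

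For the lower bound, given a function $G(n)$, I would invoke \thm{depth3} with parameters $N = n$, $S = n^2$, and gate bound $G = G(n)$, producing for each $n$ a depth-3 formula $F_n$ with $n$ inputs, at most $G(n)$ gates, and query complexity $\tilde\Omega(\min\{n, n, n^{1/2} G(n)^{1/4}\}) = \tilde\Omega(\min\{n, n^{1/2} G(n)^{1/4}\})$. Since any formula has gate fanout $1$, each $F_n$ is a constant-depth bounded-fanout circuit, hence an \ACzerob circuit; padding with trivial gates where necessary makes the circuit size exactly $G(n)$, giving the desired language.

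The main obstacle is essentially nonexistent, as the result is a repackaging of \thm{alg} and \thm{depth3}. The one point requiring some care is the unfolding bound, which relies crucially on both the constant depth and the bounded \emph{gate} fanout of \ACzerob. Input fanout in \ACzerob may still be unbounded, but this only inflates the formula size $S_F$ (which enters \thm{alg} through the separate $\sqrt{S_F}$ term); it does not affect the $n^{1/2} G_F^{1/4}$ term that governs the final bound.
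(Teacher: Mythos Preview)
Your proposal is correct and matches the paper's approach. The paper does not give an explicit proof of this corollary; it only states, immediately before the statement, that an \ACzerob circuit can be converted to a formula with a constant-factor (exponential in the depth) blowup in size, which is exactly your unfolding argument for the upper bound, and the lower bound is understood to come directly from \thm{depth3} as you indicate.
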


\section{Applications}
\label{sec:app}

\subsection{Boolean matrix product verification}
\label{sec:bmpv}

A decision problem closely related to the old and well-studied matrix multiplication problem is the matrix product verification problem. This is the task of verifying whether the product of two matrices equals a third matrix. The Boolean matrix product verification (\textsc{bmpv}) problem is the same problem where the input comprises Boolean matrices and the matrix product is performed over the Boolean semiring, i.e., the ``sum'' of two bits is their logical \textsc{or} and the ``product'' of two bits is their logical \textsc{and}.

More formally, the input to the problem consists of three $n \times n$ matrices $A$, $B$ and $C$. We have to determine whether $C_{ij} = \bigvee_k A_{ik}\wedge B_{kj}$ for all $i, j \in [n]$. Note that the input size is $3n^2$, so $O(n^2)$ is a trivial upper bound on the query complexity of this problem. Buhrman and \v{S}palek~\cite{BS06} show that \textsc{bmpv} can be solved in $O(n^{3/2})$ queries. This bound can be obtained by noting that checking the correctness of a single entry of $C$ requires $O(\sqrt{n})$ queries, so one can use Grover's algorithm to search over the $n^2$ entries for an incorrect entry.  Another way to obtain the same bound is to show that there exists a formula of size $O(n^3)$ that expresses the statement $AB = C$ and then apply \thm{formeval}.

While Buhrman and \v{S}palek do not explicitly state a lower bound for this problem, their techniques yield a lower bound of $\Omega(n)$ queries. This leaves a gap between the best known upper and lower bounds. The following theorem summarizes known facts about Boolean matrix product verification.

\begin{theorem}\label{thm:bmpv}
If $A$, $B$ and $C$ are $n \times n$ Boolean matrices available via oracles for their entries, checking whether the Boolean matrix product of $A$ and $B$ equals $C$, i.e., checking whether $C_{ij} = \bigvee_k A_{ik}\wedge B_{kj}$ for all $i,j \in [n]$, requires at least $\Omega(n)$ quantum queries and at most $O(n^{3/2})$ quantum queries.
\end{theorem}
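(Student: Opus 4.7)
The plan is to handle the two bounds separately using standard techniques. For the $O(n^{3/2})$ upper bound, the approach is to express the verification predicate as a Boolean formula of modest size and then invoke \thm{formeval}. Specifically, I would write $AB=C$ as $\bigwedge_{i,j} \bigl[C_{ij} \leftrightarrow \bigvee_k (A_{ik} \wedge B_{kj})\bigr]$, expand each equivalence using the identity $u\leftrightarrow v \equiv (u\wedge v)\vee(\bar u\wedge\bar v)$, and observe that each of the $n^2$ outer conjuncts contributes $O(n)$ leaves counted with multiplicity, yielding formula size $S = O(n^3)$. Applying \thm{formeval} gives $O(\sqrt{S})=O(n^{3/2})$ queries. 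As an equivalent route to the same bound, one could Grover-search over the $n^2$ pairs $(i,j)$ for a ``bad'' pair where $C_{ij}$ disagrees with $\bigvee_k A_{ik}\wedge B_{kj}$; each such check itself costs $O(\sqrt n)$ queries via an inner Grover on $k$, and the usual amplification of the inner routine gives the desired $O(n \cdot \sqrt n) = O(n^{3/2})$ overall.

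For the $\Omega(n)$ lower bound, the plan is a reduction from $\textsc{or}$ on $\Theta(n^2)$ bits. I would restrict attention to the promise set of instances in which $A = B = I_n$, the $n\times n$ identity; on any such instance the algorithm learns nothing new from queries to the $A$ or $B$ oracles because those values are determined by the promise. Since $AB = I_n$ on this family, verifying $AB=C$ is equivalent to checking $C=I_n$, i.e., deciding whether any off-diagonal entry of $C$ is $1$ or any diagonal entry of $C$ is $0$. After re-encoding the $C$-bits so that the target matrix $I_n$ corresponds to the all-zero string, this is exactly $\textsc{or}$ on $n^2$ bits, which requires $\Omega(n)$ quantum queries by the BBBV bound~\cite{BBBV97}. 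Any \textsc{bmpv} algorithm would have to solve this restricted problem as well, so the same lower bound applies.

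Neither direction is genuinely difficult, since the theorem merely summarizes known facts. The only mildly delicate steps to watch are counting the formula size correctly (so that the $n^2$ outer equivalences combined with the $O(n)$ leaves inside each give $O(n^3)$ and not more, even after expanding the biconditionals), and being explicit in the lower bound that the reduction is to the three-oracle problem -- i.e., that restricting $A$ and $B$ to $I_n$ is a legitimate subfamily so that the $\textsc{or}$ hardness transfers cleanly to \textsc{bmpv}.
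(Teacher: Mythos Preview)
Your proposal is correct and matches the paper's treatment: the paper presents this theorem as a summary of known facts and, in the preceding paragraph, sketches exactly the two upper-bound routes you give (Grover over the $n^2$ entries with an inner $O(\sqrt n)$ check, and the $O(n^3)$-size formula plus \thm{formeval}), while attributing the $\Omega(n)$ lower bound to the techniques of~\cite{BS06}. Your explicit reduction fixing $A=B=I_n$ so that the problem becomes $\textsc{or}$ on $n^2$ bits is a clean and correct way to obtain that lower bound, consistent with (and slightly more explicit than) what the paper indicates.
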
 

We use the results of the previous section to improve the lower bound to $\Omega(n^{1.055})$. The first step is to show how the problem of evaluating depth-2 circuits relates to the Boolean matrix product verification problem.

Let the Boolean vector product verification problem for a given $n \times n$ matrix $A$ ($\textsc{bvpv}_A$) be the problem of deciding if a given vector $v$ satisfies $Av = 1$, where 1 is the all-ones vector of length $n$. Note that $A$ is part of the specification of the problem and not part of the input. Let the Boolean function computed by this problem be denoted as ${\textsc{bvpv}}_A(v)$, where $v$ is a Boolean vector of size $n$.

Since $v$ is a vector of size $n$, the query complexity of this problem, $Q({\textsc{bvpv}}_A)$, is upper bounded by $n$. 
Observe that we can write ${\textsc{bvpv}}_A(v)$ as  $\bigwedge_i \bigvee_j A_{ij}v_{j}$, which is a monotone depth-2 circuit with $n$ \textsc{or} gates, 1 \textsc{and} gate, and $n$ input variables $v_i$. More interestingly, every monotone depth-2 circuit with $n$ \textsc{or} gates, 1 \textsc{and} gate, and $n$ input variables corresponds to $\textsc{bvpv}_A$ for some matrix $A$. 

It follows that lower bounds for linear-sized depth-2 circuits also yield lower bounds for this problem. From \cor{depth2}, we know that there exists a depth-2 circuit with $n$ gates that requires $\Omega(n^{0.555})$ queries to evaluate. Without loss of generality, we can assume (by \lem{wlog}) that the circuit is monotone and that its top gate is an \textsc{and} gate.
Thus there exists a matrix $A$ such that $Q({\textsc{bvpv}}_A) = \Omega(n^{0.555})$.

The next lemma shows us how the \textsc{bvpv} problem is related to the \textsc{bmpv} problem. 

\begin{lemma}\label{lem:bvpv}
For any $n\times n$ matrix $A$, 
$Q({\textsc{bmpv}}) = \Omega(\sqrt{n}\thinspace Q({\textsc{bvpv}}_A))$
\end{lemma}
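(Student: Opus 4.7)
The plan is to reduce $n$ independent copies of $\textsc{bvpv}_A$, joined by an outer \textsc{and}, to a single $\textsc{bmpv}$ instance, and then invoke the composition theorem for quantum query complexity.

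First, I would set up the reduction. Fix the input matrix to be the given $A$ and set $C$ to be the all-ones matrix $\mathbf{1}\mathbf{1}^{T}$; both of these are known to the algorithm in advance, so queries to their entries can be answered at no cost. Treat $B$ alone as the oracle input, and let $v_{1},\dots,v_{n}$ denote its columns. Then $AB=C$ holds iff $Av_{j}=\mathbf{1}$ for every $j\in[n]$, which is exactly $\bigwedge_{j=1}^{n}\textsc{bvpv}_{A}(v_{j})$. Thus $\textsc{bmpv}$, restricted to inputs of this shape, \emph{is} the function $\textsc{and}_{n}\circ\textsc{bvpv}_{A}$ on the $n^{2}$ input bits of $B$, and any bounded-error algorithm for $\textsc{bmpv}$ solves it with no more queries. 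This gives
\[
Q(\textsc{bmpv})\;\geq\;Q(\textsc{and}_{n}\circ\textsc{bvpv}_{A}).
\]

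Next, I would apply Reichardt's composition theorem (Theorem~1.5 of~\cite{Rei11}, the same result used in the proof of \lem{comp}), which asserts $Q(f\circ g)=\Theta(Q(f)\,Q(g))$ for arbitrary Boolean functions. Combined with the well-known bound $Q(\textsc{and}_{n})=\Theta(\sqrt{n})$ coming from Grover's algorithm and its matching lower bound, this yields
\[
Q(\textsc{and}_{n}\circ\textsc{bvpv}_{A})\;=\;\Omega\bigl(\sqrt{n}\,Q(\textsc{bvpv}_{A})\bigr),
\]
and combining with the reduction above completes the proof.

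There is no serious obstacle; the only real content is the observation that choosing $C=\mathbf{1}\mathbf{1}^{T}$ makes the matrix-product-verification condition factor across columns as an \textsc{and} of $n$ independent $\textsc{bvpv}_{A}$ instances on disjoint input bits. Once this is seen, all the heavy lifting is done by the composition theorem already used in \sec{comp} and by the optimality of Grover search.
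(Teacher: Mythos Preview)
Your proof is correct and follows essentially the same approach as the paper's: restrict to $C$ the all-ones matrix and the fixed $A$, observe that the restricted problem is $\textsc{and}_{n}\circ\textsc{bvpv}_{A}$ on the columns of $B$, and apply Reichardt's composition theorem together with $Q(\textsc{and}_n)=\Theta(\sqrt{n})$. The only cosmetic difference is that the paper leaves the bound as $\Omega(Q(\textsc{and}_n)\,Q(\textsc{bvpv}_A))$ rather than immediately substituting $\sqrt{n}$.
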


\begin{proof}
We prove a lower bound for the special case of \textsc{bmpv} where $C$ is the all ones matrix, $J$. In this case, checking whether $AB = J$ is equivalent to checking whether $Ab_i = 1$ for all $i$ where $b_i$  denotes the \th{i} column of $B$. Indeed, we can think of this problem as $n$ independent instances of the ${\textsc{bvpv}}_A$ problem: the output of this special case of \textsc{bmpv} with the first matrix being $A$ is 1 if and only if all $n$ instances of ${\textsc{bvpv}}_A$ output 1. In other words, the \textsc{bmpv} problem for a fixed $A$ and $C=J$ is just  $\textsc{bvpv}_A(b_1) \wedge \ldots \wedge \textsc{bvpv}_A(b_n) = \textsc{and}_n \circ (\textsc{bvpv}_A, \ldots, \textsc{bvpv}_A) (b_1, \ldots, b_n)$. 

As in \sec{comp}, we use Theorem 1.5 of \cite{Rei11} to conclude the quantum query complexity of the \textsc{bmpv} problem for a fixed $A$ and $C=J$ is $\Omega(Q(\textsc{and}_n) Q({\textsc{bvpv}}_A))$. Since the general \textsc{bmpv} problem can only be harder than this special case, we get the desired lower bound.
\end{proof}

Using this lemma and the lower bound obtained earlier,  $Q({\textsc{bvpv}}_A) = \tilde \Omega(n^{5/9}) = \Omega(n^{0.555})$, we get the main result of this section:

\begin{theorem}
The bounded-error quantum query complexity of the Boolean matrix product verification problem is $\tilde\Omega(n^{19/18}) = \Omega(n^{1.055})$.
\end{theorem}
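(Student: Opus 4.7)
The plan is to assemble the machinery already built up in the section and perform the final arithmetic. The discussion preceding the theorem has established two key facts: (i) by \cor{depth2}, there is a depth-$2$ circuit on $n$ inputs with at most $n$ gates whose bounded-error quantum query complexity is $\tilde\Omega(n^{5/9})$; and (ii) by \lem{bvpv}, for any $n \times n$ matrix $A$ one has $Q(\textsc{bmpv}) = \Omega(\sqrt{n}\thinspace Q(\textsc{bvpv}_A))$. Thus the only remaining task is to exhibit a specific matrix $A$ for which $Q(\textsc{bvpv}_A)$ is as large as the lower bound in~(i), and then to multiply.

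First I would apply \lem{wlog} to the hard depth-$2$ circuit supplied by \cor{depth2}. This lets me assume the circuit is monotone with an \textsc{and} gate at the top, at the cost of only constant factors in the number of inputs and in the query complexity. The resulting circuit has the form $\bigwedge_i \bigvee_j A_{ij} v_j$ for some $n \times n$ Boolean matrix $A$ acting on an input vector $v \in \{0,1\}^n$; this is exactly $\textsc{bvpv}_A(v)$ by the observation recorded just before \lem{bvpv}. Since the query complexity of this circuit is $\tilde\Omega(n^{5/9})$, we obtain a matrix $A$ such that $Q(\textsc{bvpv}_A) = \tilde\Omega(n^{5/9})$.

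Next I would plug this into \lem{bvpv}, which gives
\begin{equation*}
Q(\textsc{bmpv}) \;=\; \Omega\bigl(\sqrt{n}\thinspace Q(\textsc{bvpv}_A)\bigr) \;=\; \tilde\Omega\bigl(n^{1/2} \cdot n^{5/9}\bigr) \;=\; \tilde\Omega(n^{1/2 + 5/9}) \;=\; \tilde\Omega(n^{19/18}),
\end{equation*}
since $1/2 + 5/9 = 9/18 + 10/18 = 19/18$. This is the desired lower bound.

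There is really no substantive obstacle here, since every ingredient has been proven earlier; the proof is a short chain of invocations. The one point that deserves a sentence of justification is the passage from the abstract depth-$2$ hard instance of \cor{depth2} to a $\textsc{bvpv}_A$ instance, which relies on \lem{wlog} to normalize the top gate to \textsc{and} and to make the circuit monotone without losing query complexity or the linear bound on the number of gates. After that, the arithmetic $1/2 + 5/9 = 19/18$ completes the argument.
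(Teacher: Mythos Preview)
Your proposal is correct and follows essentially the same approach as the paper: normalize the hard depth-$2$ circuit of \cor{depth2} via \lem{wlog} to obtain a $\textsc{bvpv}_A$ instance with $Q(\textsc{bvpv}_A)=\tilde\Omega(n^{5/9})$, then apply \lem{bvpv} and compute $1/2+5/9=19/18$. The paper presents exactly this chain of invocations in the paragraphs preceding the theorem, so nothing is missing.
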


\subsection{Applications to classical circuit complexity}
\label{sec:classical}

In this section we present some classical applications of our results. In particular, we prove lower bounds on the number of gates needed in any formula representing certain functions. The main tool we use is the following corollary of \thm{alg}.

\begin{corollary}\label{cor:alg}
For a function $f$ with $n$ inputs and quantum query complexity $Q(f)$, any (unbounded-fanin) formula representing $f$ requires $\Omega(Q(f)^4/n^2)$ gates.
\end{corollary}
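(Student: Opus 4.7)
The plan is to derive the bound as a direct rearrangement of the upper bound provided by \thm{alg}. Suppose $f$ is represented by a formula with $n$ inputs and $G$ gates (of some size $S$). Then \thm{alg} guarantees the existence of a quantum algorithm that evaluates this formula, and hence computes $f$, using $O(\min\{n, \sqrt{S}, n^{1/2} G^{1/4}\})$ queries. Since $Q(f)$ is the bounded-error quantum query complexity of $f$, it is at most the cost of any particular algorithm computing $f$, so
\begin{equation*}
  Q(f) \le O\bigl(\min\{n, \sqrt{S}, n^{1/2} G^{1/4}\}\bigr) \le O\bigl(n^{1/2} G^{1/4}\bigr).
\end{equation*}

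Rearranging this inequality by raising both sides to the fourth power and solving for $G$ yields $G = \Omega(Q(f)^4/n^2)$, which is the claim. The argument is essentially a tautology given \thm{alg}; the only subtlety worth noting is that the min in the upper bound includes other terms, but we are free to discard them and retain only the $n^{1/2}G^{1/4}$ term since any upper bound on the min is still an upper bound on $Q(f)$. There is no real obstacle here; the substance of the corollary lies entirely in the strength of \thm{alg}, and the usefulness of the corollary comes from applying it with functions $f$ whose quantum query complexity is already known (or bounded below) by other means, so that one obtains a gate-count lower bound for any formula computing them.
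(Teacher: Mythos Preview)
Your proof is correct and matches the paper's approach exactly: the paper presents this as an immediate corollary of \thm{alg} without further argument, and your derivation---bounding $Q(f)$ by the $n^{1/2}G^{1/4}$ term of the upper bound and rearranging---is precisely the intended one-line justification.
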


Almost immediately, this implies that functions such as \textsc{parity} and \textsc{majority}, which have quantum query complexity of $\Omega(n)$, require $ \Omega(n^2)$ gates to be represented as a formula. Similarly, this implies that functions with query complexity $\Omega(n^{3/4})$, such as \textsc{graph connectivity}~\cite{DHHM06}, \textsc{graph planarity}~\cite{AIN+08}, and \textsc{hamiltonian cycle}~\cite{BDF+03}, require formulas with $ \Omega(n)$ gates.

To compare this with previous results, it is known that \textsc{parity} requires formulas of size $\Omega(n^2)$~\cite{Khr71}. This result is implied by our result since the number of gates is less than the size of a formula.

We can also use these techniques to address the question ``Given a constant-depth circuit of size $G$, how efficiently can this circuit be expressed as a formula?''
The best result of this type that we are aware of is the following: There exists a constant-depth circuit of linear size such that any formula expressing the same function has size at least $n^{2-o(1)}$. The result appears at the end of Section 6.2 in \cite{Juk12}, where the function called $V^\textsc{or}(x,y)$ is shown to have these properties. Indeed, the function has a depth-3 formula with $O(n)$ gates. The idea of using such functions, also called universal functions, is attributed to Nechiporuk~\cite{Nec66}.

We construct an explicit constant-depth circuit of linear size that requires $\Omega(n^{2-\epsilon})$ gates to be expressed as a formula. To the best of our knowledge, this result is new.  Our result is incomparable to the previous result since we lower bound the number of gates, which also lower bounds the formula size, but we use a constant-depth circuit as opposed to a depth-3 formula.

Improving our lower bound to $\Omega(n^2)$ seems difficult, since we do not even know an explicit constant-depth circuit of linear size that requires formulas of \emph{size} (as opposed to number of gates) $\Omega(n^2)$, which is a weaker statement. In fact, we do not know any explicit function in \ACzero with a formula size lower bound of $\Omega(n^2)$ (for more information, see~\cite{CST1}).

\begin{theorem}\label{thm:linsize}
For every $\epsilon>0$, there exists a constant-depth unbounded-fanin circuit (i.e., an \ACzero circuit) of size $O(n)$ such that any (unbounded-fanin) formula computing the same function must have $\Omega(n^{2-\epsilon})$ gates.
\end{theorem}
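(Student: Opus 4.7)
The plan is to apply \cor{alg}, which gives a formula-gate lower bound of $\Omega(Q(f)^4/n^2)$ for any function $f$ on $n$ inputs. Thus it suffices to exhibit, for every $\epsilon>0$, a constant-depth linear-size circuit whose function $f$ satisfies $Q(f) = \tilde\Omega(n^{1-\epsilon/4})$; the corollary then yields a formula gate count of $\tilde\Omega(n^{2-\epsilon})$, and the remaining polylogarithmic factors can be absorbed by shrinking $\epsilon$ slightly.

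I would build $f$ by iterated composition via \lem{comp}, using \textsc{onto} as the ``hardness amplifier'' and \textsc{or} as the base case. Define $h_0 = \textsc{or}_{N_0}$, and inductively $h_{k+1} = \textsc{onto}_{m_{k+1}} \circ h_k$ using \lem{comp}, where the parameter $m_{k+1}$ is chosen equal to $N_k$, the input size of $h_k$. Thus each of the $\tilde\Theta(m_{k+1})$ input bits of $\textsc{onto}_{m_{k+1}}$ is replaced by its own copy of $h_k$. The key invariants to maintain inductively are that $h_k$ is a circuit of constant depth with $G(h_k) = \tilde O(N_k)$.

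For the size invariant, \lem{comp} gives $G(h_{k+1}) \le 2G(\textsc{onto}_{m_{k+1}}) + 4 n_{\textsc{onto}} \cdot G(h_k) = O(m_{k+1}^2) + \tilde O(m_{k+1} N_k)$; with $m_{k+1}=N_k$ this is $\tilde O(m_{k+1}^2)$, matching $N_{k+1} = \tilde\Theta(m_{k+1} N_k) = \tilde\Theta(m_{k+1}^2)$ up to polylog factors. The depth grows by $2$ per iteration (since $\textsc{onto}$ has depth $3$ and merges one layer with $h_k$), so $h_k$ has depth $2k+1$, which is $O(1)$ for any fixed $k$. For query complexity, \lem{comp} and \prop{onto} give $Q(h_{k+1}) = \tilde\Omega(m_{k+1}\cdot Q(h_k)) = \tilde\Omega(N_k \cdot N_k^{\alpha_k}) = \tilde\Omega(N_{k+1}^{(1+\alpha_k)/2})$. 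Starting from $\alpha_0 = 1/2$, the recurrence $\alpha_{k+1} = (1+\alpha_k)/2$ yields $\alpha_k = 1 - 2^{-(k+1)}$. Choosing $k = \lceil \log_2(8/\epsilon)\rceil$ gives $\alpha_k \ge 1 - \epsilon/8$, and the resulting circuit has the claimed properties.

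The main obstacle is the careful bookkeeping: one must verify at each step that the $4n_f G_g$ term in \lem{comp} does not spoil the linear-size invariant (which is precisely why the parameters $m_{k+1} = N_k$ are tuned to balance the two contributions), and one must track the polylogarithmic factors introduced at every application of \prop{onto} and of the $n_f = (2m-2)\log m$ input-encoding of \textsc{onto}, showing that they can be absorbed by the slack between $1 - \epsilon/8$ and $1 - \epsilon/4$. A minor technicality is that \lem{comp} may demand a \lem{wlog}-style monotonization and a matching of the merged gate types at each level; each such step only inflates the gate count by a constant factor and does not affect the asymptotic bounds.
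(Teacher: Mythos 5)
Your proof is correct and uses the same three ingredients as the paper's argument: \prop{onto} for the hard \ACzero function, iterated composition via \lem{comp}, and finally \cor{alg} to convert a query lower bound into a formula-gate lower bound. The difference is in how the composition chain is parameterized, and both parameterizations work. The paper composes $\textsc{onto}$ with itself at two scales ($f$ on $m$ inputs with $f$ on $n/m$ inputs), which keeps the query complexity pinned at $\tilde\Omega(n)$ while driving the circuit-size exponent $r$ from $2$ toward $1$ via $r \mapsto r^2/(2r-1)$; a final restriction to the first $n^{1/(1+\delta)}$ input bits then gives a genuinely linear-size circuit. You run the dual recursion: you fix a linear-size base $h_0 = \textsc{or}$ (giving $\alpha_0 = 1/2$) and repeatedly stack $\textsc{onto}$ on top with the scale $m_{k+1}=N_k$ chosen to balance the two terms in \lem{comp}'s size bound, so the circuit stays $\tilde O(N_k)$ gates while the query exponent climbs via $\alpha \mapsto (1+\alpha)/2$. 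Both recurrences reach the target in a constant number of rounds (and in fact the two approaches end up at comparable constant depth, $\Theta(\log(1/\epsilon))$). You also correctly flag the two places where polylogarithmic slack must be absorbed: in the query bound (where you keep $\epsilon/8$ rather than $\epsilon/4$ as headroom) and in the circuit size (where a final padding step turns $\tilde O(N_k)$ gates on $N_k$ inputs into $O(n)$ gates on $n$ inputs); the paper handles the analogous issue in its own final restriction step. In short: same strategy, a genuinely different and equally valid bookkeeping of the composition chain.
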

\begin{proof}
The aim is to construct a function in \ACzero with linear size and nearly maximal quantum query complexity, and then to apply \cor{alg}. We know that there exists a depth-3 function, $\textsc{onto}$, with query complexity $\Omega(n/\log n)$ and size $O(n^2/\log n)$ (\prop{onto}). Composing the $\textsc{onto}$ function with itself using \lem{comp} gives a depth-5 circuit of smaller size with a similar query complexity lower bound.

In general, if we have a function $f$ with a circuit of depth $k$, size $O(n^r)$ for some $1 \leq r \leq 2$, and query complexity $\Omega(n/\log^c n)$, then we can construct a function $f'$ with a circuit of depth $2k-1$, size $O(n^{{r^2}/(2r-1)})$, and query complexity $\Omega(n/\log^{2c} n)$. This is achieved by composing the function $f$ on $m$ inputs with $f$ on $n/m$ inputs to get a new function $f'$ on $n$ inputs. The size of the resulting circuit is $O(m^r + n^r/m^{r-1})$. To make the two terms equal, we choose $m^{2r-1} = n^r$, which gives the claimed size. The query complexity of $f'$ is $\Omega((m / \log^c m)((n/m) / \log^c (n/m))) = \Omega(n / \log^{2c} n)$.

Now we can iterate this construction to get smaller circuits with about the same query complexity. Since the size of the circuit decreases from $O(n^r)$ to $O(n^{{r^2}/(2r-1)})$, the circuit size approaches $O(n)$ as the number of iterations increases. Clearly, for any $\delta>0$, we can reach a circuit of size $O(n^{1+\delta})$ with a constant number of iterations. The resulting function has constant depth and query complexity $\Omega(n / \log^{c} n)$, for some constant $c$ that depends on $\delta$. Now we can define a new function $g$ on $n$ inputs that is this function acting on the first $n^{1/(1+\delta)}$ bits. Clearly this function has a linear-size circuit, and its query complexity is $\Omega(n^{1/(1+\delta)} / \log^{c} n)$, which is $\Omega(n^{1-\epsilon'})$ for some $\epsilon'>0$. Since $\delta>0$ could be chosen arbitrarily, we can achieve any $\epsilon'>0$ in this step. \cor{alg} now completes the proof.
\end{proof}

\section{Conclusions and open problems}
\label{sec:conc}

We have given a tight characterization of the query complexity of read-many formulas in terms of their number of inputs $n$, formula size $S$, and gate count $G$. In particular, we showed that the query complexity of evaluating this class of formulas is $\Theta(\min\{n, \sqrt{S}, n^{1/2}G^{1/4}\})$. Our results suggest several new avenues of research, looking both toward refined characterizations of the query complexity of evaluating formulas and toward a better understanding of the quantum query complexity of evaluating circuits.

In \sec{const} we showed that our query complexity bounds are nearly tight for all formulas of a given depth except for the case of depth 2. We made partial progress on this remaining case by giving a depth-2 circuit of size $n$ that requires $\Omega(n^{0.555})$ queries, whereas our algorithm gives an upper bound of $O(n^{0.75})$.

We expect that it should be possible to improve the lower bound for linear-size depth-2 circuits.  We have considered candidate depth-2 circuits, based on affine or projective planes, that seem difficult to evaluate.  Here we describe a circuit based on a projective plane, which is a set $P$ of points and a set $L$ of lines such that any two distinct points are on a unique line, any two distinct lines intersect at a unique point, and there exist four points with no three of them on the same line. A projective plane of order $q$ consists of $n=q^2+q+1$ points and $n$ lines, where each line contains $q+1$ points and each point is on $q+1$ lines. We write $i\in \ell$ to indicate that the point $i \in P$ is on the line $\ell \in L$. Consider the depth-2 circuit
\begin{equation}
\label{eq:proj}
\bigvee_{\ell\in L}\bigwedge_{i\in \ell} x_i
\end{equation}
where $x_i$ is a bit assigned to the point $i \in P$. This circuit has $n$ variables and $n+1$ gates. Clearly, the 1-certificate complexity of this formula is $q+1$; it can also be shown that when $q$ is a square, the 0-certificate complexity is $q^{3/2}+1$ (see for example \cite{BE08}). Thus the certificate complexity barrier \cite{SS06,Zha05} only rules out proving a better lower bound than $\Omega(n^{5/8})=\Omega(n^{0.625})$ using the adversary method with positive weights.  However, we are not aware of any lower bound better than $\Omega(n^{3/8})$, or any upper bound better than the $O(n^{3/4})$ bound of \thm{alg}.

While we have made progress in understanding the quantum query complexity of evaluating read-many formulas, we would also like to understand the query complexity of evaluating general circuits.  It would be interesting to find upper and lower bounds on the query complexity of evaluating circuits as a function of various parameters such as their number of inputs, gate count, fanout, and depth.  In particular, the graph collision problem can also be expressed using a circuit of depth $3$ and linear size (in addition to the naive depth-2 circuit of quadratic size mentioned in \sec{const}), so it would be interesting to focus on the special case of evaluating such circuits.

\section*{Acknowledgments}

We thank Noam Nisan for the proof of \prop{counting} (via the website \href{http://cstheory.stackexchange.com}{cstheory.stackexchange.com}).  R.~K.\ thanks Stasys Jukna for helpful discussions about circuit complexity.

This work was supported in part by MITACS, NSERC, the Ontario Ministry of Research and Innovation, QuantumWorks, and the US ARO/DTO.
This work was done while S.~K.\ was visiting the Institute for Quantum Computing at the University of Waterloo.
S.~K.\ also received support from NSF Grant No. DGE-0801525,
\emph{IGERT: Interdisciplinary Quantum Information Science and Engineering}, and from the U.S.\ Department of Energy under cooperative research agreement Contract Number DE-FG02-05ER41360.


\begin{thebibliography}{10}

\bibitem{AS04}
Scott Aaronson and Yaoyun Shi.
\newblock Quantum lower bounds for the collision and the element distinctness
  problems.
\newblock {\em Journal of the ACM}, 51:595--605, July 2004.

\bibitem{Ambainis05}
Andris Ambainis.
\newblock Polynomial degree and lower bounds in quantum complexity: Collision
  and element distinctness with small range.
\newblock {\em Theory Of Computing}, 1:37--46, 2005.

\bibitem{ACRSZ10}
Andris Ambainis, Andrew~M. Childs, Ben~W. Reichardt, Robert {\v{S}}palek, and
  Shengyu Zhang.
\newblock Any {AND-OR} formula of size {$N$} can be evaluated in time {$N^{1/2
  + o(1)}$} on a quantum computer.
\newblock {\em SIAM Journal on Computing}, 39(6):2513--2530, 2010.
\newblock Preliminary version in FOCS 2007.

\bibitem{AIN+08}
Andris Ambainis, Kazuo Iwama, Masaki Nakanishi, Harumichi Nishimura, Rudy
  Raymond, Seiichiro Tani, and Shigeru Yamashita.
\newblock Quantum query complexity of boolean functions with small on-sets.
\newblock In {\em ISAAC '08: Proceedings of the 19th International Symposium on
  Algorithms and Computation}, pages 907--918, 2008.

\bibitem{BS04}
Howard Barnum and Michael Saks.
\newblock A lower bound on the quantum query complexity of read-once functions.
\newblock {\em Journal of Computing and System Sciences}, 69(2):244--258, 2004.

\bibitem{BE08}
Susan Barwick and Gary Ebert.
\newblock {\em Unitals in Projective Planes}.
\newblock Springer, 2008.

\bibitem{BBCMW01}
Robert Beals, Harry Buhrman, Richard Cleve, Michele Mosca, and Ronald de~Wolf.
\newblock Quantum lower bounds by polynomials.
\newblock {\em Journal of the ACM}, 48(4):778--797, 2001.
\newblock Preliminary version in FOCS 1998.

\bibitem{BM10}
Paul Beame and Widad Machmouchi.
\newblock The quantum query complexity of {AC$^0$}.
\newblock arXiv:1008.2422.

\bibitem{BBBV97}
Charles~H. Bennett, Ethan Bernstein, Gilles Brassard, and Umesh Vazirani.
\newblock Strengths and weaknesses of quantum computing.
\newblock {\em SIAM Journal on Computing}, 26:1510--1523, 1997.

\bibitem{BDF+03}
Aija Berzina, Andrej Dubrovsky, Rusins Freivalds, Lelde Lace, and Oksana
  Scegulnaja.
\newblock Quantum query complexity for some graph problems.
\newblock In {\em SOFSEM '04: Theory and Practice of Computer Science}, volume
  2932, pages 1--11, 2004.

\bibitem{BBH+98}
Michel Boyer, Gilles Brassard, Peter H{\o}yer, and Alain Tapp.
\newblock Tight bounds on quantum searching.
\newblock {\em Fortschritte der Physik}, 46(4-5):493--505, 1998.

\bibitem{BCW98}
Harry Buhrman, Richard Cleve, and Avi Wigderson.
\newblock Quantum vs. classical communication and computation.
\newblock In {\em STOC '98: Proceedings of the 30th Symposium on Theory of
  Computing}, pages 63--68, 1998.

\bibitem{BS06}
Harry Buhrman and Robert \v{S}palek.
\newblock Quantum verification of matrix products.
\newblock In {\em SODA '06: Proceedings of the 17th annual ACM-SIAM Symposium
  on Discrete Algorithms}, pages 880--889, 2006.

\bibitem{CCJY07}
Andrew~M. Childs, Richard Cleve, Stephen~P. Jordan, and David Yonge-Mallo.
\newblock Discrete-query quantum algorithm for {NAND} trees.
\newblock {\em Theory of Computing}, 5:119--123, 2009.

\bibitem{DHHM06}
Christoph D{\"u}rr, Mark Heiligman, Peter H{\o}yer, and Mehdi Mhalla.
\newblock {Quantum query complexity of some graph problems}.
\newblock {\em SIAM Journal on Computing}, 35(6):1310--1328, 2006.
\newblock Preliminary version in ICALP 2004.

\bibitem{FGG08}
Edward Farhi, Jeffrey Goldstone, and Sam Gutmann.
\newblock A quantum algorithm for the {H}amiltonian {NAND} tree.
\newblock {\em Theory of Computing}, 4(1):169--190, 2008.

\bibitem{FGGS98}
Edward Farhi, Jeffrey Goldstone, Sam Gutmann, and Michael Sipser.
\newblock Limit on the speed of quantum computation in determining parity.
\newblock {\em Physical Review Letters}, 81(24):5442--5444, 1998.

\bibitem{Gro97}
Lov~K. Grover.
\newblock Quantum mechanics helps in searching for a needle in a haystack.
\newblock {\em Physical Review Letters}, 79(2):325--328, 1997.
\newblock Preliminary version in STOC 1996.

\bibitem{HMW03}
Peter H{\o}yer, Michele Mosca, and Ronald de~Wolf.
\newblock Quantum search on bounded-error inputs.
\newblock In {\em ICALP '03: Proceedings of the 30th International Colloquium
  on Automata, Languages, and Programming}, volume 2719 of {\em Lecture Notes
  in Computer Science}, pages 291--299, 2003.

\bibitem{Juk12}
Stasys Jukna.
\newblock {\em Boolean Function Complexity}.
\newblock Algorithms and Combinatorics. Springer, 2012.

\bibitem{Khr71}
V.~M. Khrapchenko.
\newblock Complexity of the realization of a linear function in the class of
  {$\Pi$}-circuits.
\newblock {\em Mathematical Notes}, 9(1):21--23, 1971.

\bibitem{CST1}
Robin Kothari~(cstheory.stackexchange.com/users/206).
\newblock Formula size lower bounds for $\text{AC}^0$ functions.
\newblock Theoretical Computer Science Stack Exchange.
\newblock \url{http://cstheory.stackexchange.com/q/7156} (version: 2011-06-30).

\bibitem{Kutin05}
Samuel Kutin.
\newblock Quantum lower bound for the collision problem with small range.
\newblock {\em Theory of Computing}, 1(1):29--36, 2005.

\bibitem{MSS07}
Fr{\'e}d{\'e}ric Magniez, Miklos Santha, and Mario Szegedy.
\newblock Quantum algorithms for the triangle problem.
\newblock {\em SIAM Journal on Computing}, 37(2):413--424, 2007.
\newblock Preliminary version in SODA 2005.

\bibitem{Nec66}
E.~I. Nechiporuk.
\newblock A {B}oolean function.
\newblock In {\em Soviet Mathematics Doklady}, volume~7, pages 999--1000, 1966.

\bibitem{Rei09}
Ben Reichardt.
\newblock Span programs and quantum query complexity: The general adversary
  bound is nearly tight for every boolean function.
\newblock In {\em FOCS '09: Proceedings of the 50th annual IEEE Symposium on
  Foundations of Computer Science}, pages 544--551, 2009.

\bibitem{Rei11}
Ben Reichardt.
\newblock Reflections for quantum query algorithms.
\newblock In {\em SODA '11: Proceedings of the 22nd ACM-SIAM Symposium on
  Discrete Algorithms}, pages 560--569, 2011.

\bibitem{SS06}
R.~{\v S}palek and M.~Szegedy.
\newblock All quantum adversary methods are equivalent.
\newblock {\em Theory of Computing}, 2(1):1--18, 2006.
\newblock Preliminary version in ICALP 2005.

\bibitem{Zha05}
S.~Zhang.
\newblock On the power of {A}mbainis lower bounds.
\newblock {\em Theoretical Computer Science}, 339(2-3):241--256, 2005.
\newblock Preliminary version in ICALP 2004.

\end{thebibliography}
\end{document}